\documentclass[authoryear,preprint,12pt]{elsarticle}

\usepackage{amssymb}
\usepackage{amsmath}
\usepackage{amsthm}
\newtheorem{proposition}{Proposition}[section]
\newtheorem{corollary}{Corollary}[section]
\usepackage{graphicx}
\PassOptionsToPackage{hyphens}{url}
\usepackage{hyperref}
\usepackage{tabularx}
\usepackage{longtable}
\usepackage{booktabs}
\usepackage[font=footnotesize]{caption}
\usepackage{url}
\usepackage{xcolor}

\graphicspath{{./}}

\newcommand{\manuscripttitle}{Beyond Consistent Scenarios: Deriving Indirect Influence, Transition Resistance, and Adjustment Dynamics}

\begin{document}

\begin{titlepage}
    \begin{center}
        \vspace*{0.1cm}
        {\LARGE Beyond Consistent Scenarios: Deriving Indirect Influence, Transition Resistance, and Adjustment Dynamics}\par
        \vspace{1.0cm}
        \vspace{1.0cm}
                Andrew G. Ross $^{1,*}$, Julia Gershenzon $^{1}$, Andreas Kleefeld $^{2,3,4}$ \\
        \vspace*{1.25\baselineskip}
        $^1$ Forschungszentrum J\"ulich GmbH, Institute of Climate and Energy Systems, J\"ulich Systems Analysis, Wilhelm-Johnen-Strasse, 52425 J\"ulich, Germany \\

\vspace*{0.75\baselineskip}

        $^2$ Forschungszentrum J\"ulich GmbH, J\"{u}lich Supercomputing Centre, Wilhelm-Johnen-Strasse, 52425 J\"ulich, Germany \\

\vspace*{0.75\baselineskip}

        $^3$ FH Aachen - University of Applied Sciences, Faculty of Medical Engineering and Technomathematics, Heinrich-Mu\ss{}mann-Str. 1, 52428 J\"{u}lich, Germany \\

\vspace*{0.75\baselineskip}

        $^4$ FH Aachen - University of Applied Sciences, Institute for Data-driven Technologies, Heinrich-Mu\ss{}mann-Str. 1, 52428 J\"{u}lich, Germany \\

        \vspace*{1\baselineskip}

\vspace*{4\baselineskip}
$^*$ Corresponding author: a.ross@fz-juelich.de \\
\vspace*{4\baselineskip}
July 11, 2026
        
    \end{center}
\end{titlepage}

\begin{titlepage}
    \begin{center}
        \vspace*{0.1cm}
        {\LARGE Beyond Consistent Scenarios: Deriving Indirect Influence, Transition Resistance, and Adjustment Dynamics}\par
        \vspace{0.5cm}
    \end{center}

    \vspace{0.5cm}

    {\noindent Assessments of structural change and economic transition dynamics, such as those arising in the energy transition, depend on internally consistent qualitative scenarios specifying the policy environment, technology mix, governance arrangements, and demand conditions. Cross-Impact Balance (CIB) analysis derives such socio-technical scenarios as fixed-point attractors of an expert-elicited interdependency network, supplying structural inputs upon which assessment models (including energy system optimisation, agent-based, and general equilibrium frameworks) can draw. Standard CIB, however, delivers only this equilibrium catalogue, leaving four structural questions unanswered: how much network-weighted effort a given transition requires; which components are the true system-wide levers once indirect influence chains are counted; in what sequence the system adjusts; and how the network at a given attractor responds to an external shock. This paper extends CIB through Linear Response Theory, exploiting a structural isomorphism between the CIB drift matrix and the Leontief input-output technology matrix. Four analytical objects are derived in closed form: the Type I cross-impact multiplier, which aggregates all direct and indirect influence chains; the perturbation budget, a network-weighted and directionally asymmetric measure of transition effort; the impulse response function, which traces descriptor adjustment sequences and feedback-induced overshoots; and the unit-impulse shock profile, which characterises attractor-specific network sensitivity and yields a direct measure of structural resilience and susceptibility. The framework is applied empirically to an energy-transition cross-impact matrix, yielding all four objects for five structural equilibria, and transfers to any domain in which pairwise influence scores encode structural interdependencies.} \\

    \vspace{0.5em}

    \noindent \textbf{Keywords}: Structural change; Scenario planning; Green—Kubo relations; Perturbation theory; Multivariate Ornstein—Uhlenbeck process; Complexity

    \vspace{0.5em}

    \noindent \textbf{JEL Classification}: C63, C65, O33, Q48

    \vfill

\end{titlepage}

\newpage

\section{Introduction}
\label{sec:intro}

\noindent A large structural shift, such as a transition from fossil-fuel dependence to a renewable energy system, is not merely a change in the levels of aggregate variables but a reorganisation of the interdependency structure among the components of the underlying system. The questions at stake concern which variables reinforce which, which configurations are self-sustaining, and how the system responds to external perturbation \citep{MillerPage2009}. The analytical questions that follow are correspondingly structural. Which equilibria are internally consistent? How much effort is required, in network-weighted terms, to steer the system from one structural equilibrium to another? Which components are the true system-wide levers once all chains of indirect influence are counted? In what temporal sequence do the components of the system adjust during the transition? And how does the network at a given structural equilibrium respond to a unit push on a single component, independently of any specified destination?\\

\noindent Approaches to such assessments are distributed across disciplines. Input-output (IO) and computable general equilibrium models (CGE), for example, encode production and demand interdependencies and their response to perturbation \citep[e.g.][]{miller2009input, ShovenWhalley1984,BOHRINGER2008574}; integrated assessment models (IAM) combine energy, climate, and economic modules to project quantitative transition pathways \citep[e.g.][]{Weyant2017,LIU2026287}; energy system optimisation models (ESOM) represent detailed energy technologies, flows, and resource constraints to identify least-cost system configurations \citep[e.g.][]{HORSCH2018207,Weinand2026}; and system dynamics and agent-based frameworks capture feedback mechanisms and emergent adjustment dynamics \citep[e.g.][]{Sterman2000,HASUMI2025238,POLEDNA2023104306}. Causal mapping methods encode expert-elicited influence relationships \citep[e.g.][]{Godet2001, Kosko1986}; network-theoretic approaches identify structural leverage points and system controllability \citep{LiuSlotineBarabasi2011}; and econometric methods detect and date structural breaks in observed time series \citep{BaiPerron1998,Hansen}. No single approach, however, fully addresses all five structural questions posed above, and the large-scale quantitative frameworks among them depend on externally supplied scenario inputs whose coherence they do not verify.\\

\noindent The large-scale quantitative frameworks in this landscape (e.g. CGE, ESOM, and IAM models) require some form of external scenario guidance for their assessments and for this purpose can draw on internally consistent socio-technical scenarios derived from an expert-elicited qualitative interdependency structure. Such scenarios specify the policy environment, the technology mix, the governance arrangements, and the demand conditions that together constitute the structural setting each model evaluates. Those scenario assumptions must be internally coherent. Yet standard scenario construction, which assembles qualitative components from independently specified storyline elements, provides no formal mechanism for verifying such coherence. Within such socio-technical scenarios, the further structural questions of transition effort, indirect leverage, adjustment sequencing, and shock response remain unanswered across the methods surveyed above. The first of these gaps motivates the use of structured scenario methods that derive internally consistent futures as equilibria of a system of interdependencies rather than assembling them from independent storyline components.\\

\noindent Cross-impact balance (CIB) analysis is a structured elicitation method that encodes this interdependency structure in a matrix of pairwise influence scores across the dimensions of a socio-technical system to derive consistent scenarios \citep{Weimer-Jehle2006}. In this context, a consistent scenario is a structural equilibrium of the encoded interdependency network, in which every component is in the state most strongly supported by all other components simultaneously. Such equilibria are the fixed-point attractors of the cross-impact network, and the elicited cross-impact matrix (CIM) records the structural coupling among components \citep{Weimer-Jehle2006}. Standard CIB thus provides an equilibrium catalogue, a complete set of internally consistent configurations, answering the first question posed above. These internally consistent configurations, together with the socio-technical storylines that accompany them, are also the qualitative inputs upon which quantitative assessment frameworks such as ESOM, CGE, and IAM analyses can draw \citep{schweizer2020reflections,weimer2020socio,WEIMERJEHLE2016956}. The remaining four questions, however, lie outside the scope of standard CIB.\\

\noindent Several strands of work have attempted to extend CIB beyond this equilibrium catalogue. One strand concerns temporal dynamics, specifically how the system moves between attractors and what determines pathway distributions. \citet{Weimer-Jehle2006} proposes relaxing the deterministic succession rule by selecting the next descriptor state stochastically. \citet{Schweizer2023} formalises this approach as a Markov chain with probabilistic diagnostics and information-theoretic measures, and \citet{Campfens2025} integrates feedback loop analysis with succession to identify intervention points capable of initiating cascades towards desired attractors. A parallel strand addresses uncertainty about the CIM itself rather than the update rule. \citet{Salo2022} and \citet{Roponen2024} reframe CIB in fully probabilistic terms, synthesising cross-impact judgements into Bayesian networks, whilst \citet{Ross2026} implements a score-perturbation Monte Carlo workflow within a unified pathway framework.\\

\noindent The succession-based extensions face two structural limitations that the present framework addresses directly. The succession operator advances every descriptor simultaneously at each period, imposing a uniform response timescale across all descriptors. The apparent ordering of descriptor movements in succession-based trajectories is consequently an artefact of the protocol's uniform period length, not a property of the cross-impact network. A case in point is the Markov-chain formalisation of \citet{Schweizer2023}, the most rigorous succession-based treatment to date. As \citet{Schweizer2023} find, stochastic update rules gain information about transition pathways but may lose information about system attractors in the process. More fundamentally, applying succession dynamics to the CIM implicitly treats expert-elicited qualitative influence scores as though they were empirically grounded parameters governing the causal propagation of structural change. Standard CIB elicitation establishes pairwise influence directions and approximate relative magnitudes \citep{Weimer-Jehle2006,Weimer-Jehle2023}. However, it does not produce the identified coefficients of a state-space model. Using it as though it did conflates elicitation with identification \citep{Ross2026}. The score-perturbation strand, which propagates uncertainty through the matrix entries rather than running the succession operator, does not add this specific conflation, though the epistemological limitations inherent in expert-elicited CIM entries apply to all CIB-based approaches. It is this strand that the present framework draws on for robustness assessment.\\

\noindent Against this background, all four remaining questions posed in the opening paragraph remain unanswered by either standard or dynamic CIB. The indirect-leverage question warrants particular attention. Unlike transition effort, adjustment sequencing, and shock response (which the equilibrium catalogue plainly cannot supply), the raw cross-impact scores appear to provide leverage information directly, and do so misleadingly. Raw cross-impact scores encode only first-order pairwise links, so feedback loops in which sustained pressure on a descriptor ultimately reverses its own effect through the wider network are entirely invisible in the elicited matrix, and a monitoring framework built on those scores alone will misread such transient reversals as evidence that a transition is failing rather than as intrinsic features of the network's adjustment dynamics.\\

\noindent This paper introduces the CIB-LRT framework. Its point of departure is a structural observation that has not, as far as can be established, previously been made: the CIB drift matrix and the Leontief input-output technology matrix are algebraically identical in form. When a consistent scenario is used as a reference point and the CIM is linearised around it, the resulting drift matrix $M = W - I_N$ (where $W$ encodes the pairwise cross-impact scores at the consistent scenario and $I_N$ is the $N \times N$ identity matrix) is algebraically identical to the standard dynamic Leontief drift matrix. This isomorphism is not a modelling choice but a structural property of the two frameworks, and it makes the full Green-Kubo Linear Response Theory (LRT) machinery of \citet{Klimek2019} directly applicable to CIB. LRT provides in this context an analytic link between the microscopic equilibrium fluctuations of the system and its macroscopic out-of-equilibrium response to exogenous changes in that system \citep{Klimek2019}, and delivers four analytical objects. These objects are the perturbation budget, the Type I cross-impact multiplier, the impulse response function (IRF), and the unit-impulse shock profile. All four are derived in closed form from the CIM at a fixed consistent scenario, without any succession step, as direct consequences of this isomorphism.\footnote{The primary economic contribution of the LRT framework is due to \citet{Klimek2019}. \citet{Raseta}, available as a discussion paper, were the first to derive the closed-form expressions for the susceptibility, impulse response function, and multiplier in the IO economics setting, and confirmed that these reproduce the simulation-based results of \citet{Klimek2019}. The closed-form expressions used in the present paper are those first established by \citeauthor{Raseta}; Section~\ref{sec:S6} of the Supplementary Materials provides self-contained proofs adapted to the CIB setting that do not require \citeauthor{Raseta} to be formally published. The IO-LRT framework has since been applied in a study to assess economic resilience and susceptibility to extreme events and disruptions \citep{RossEtAl2025}.} Because each descriptor relaxes at a rate determined by its role in the cross-impact network, the ordering of responses is a structural property of the CIM rather than an artefact of the succession protocol's uniform period length. The Green-Kubo relations \citep{Green1954,Kubo1957} and the Ornstein-Uhlenbeck framework are standard tools of mathematical physics and statistical mechanics, brought into the economics literature via econophysics (e.g. \citet{Klimek2019}).\\

\noindent The CIB-LRT framework takes the critique of succession-based dynamics as a design constraint rather than a limitation. By deriving all four analytical objects without running a single succession step, the framework avoids the conflation of elicitation with identification that succession-based dynamics entail. It does not, however, resolve the weaker empirical grounding of a small-sample expert-elicited CIM relative to e.g. interactions estimated from large-scale transaction data. For robustness assessment, the framework operates in the score-perturbation mode, and the perturbation budget is shown to be a robust structural property of the cross-impact network, remaining stable under score perturbations at the uncertainty levels identified by \citet{Salo2022}.\\

\noindent With these properties established, the framework is applied empirically to a 15-descriptor energy-transition CIM from \citet{RossAGandAM2026}, which encodes the structural interdependencies among policy, technology, infrastructure, and demand-side dimensions of a low-carbon transition. However, the framework is domain-agnostic. The analytical objects are derived solely from the CIM, require no additional data beyond what a standard CIB elicitation already produces, and transfer without modification to any domain in which expert-elicited pairwise influence scores encode structural interdependencies, including energy \citep[e.g.][]{VOGELE2017937}, water(conflict) \citep[e.g.][]{kosow2024uncharted}, transport and mobility \citep[e.g.][]{TORI2023103160}, and climate-adaptation governance \citep[e.g.][]{schweizer2020reflections}. The energy-transition application is illustrative; the method is general.\\

\noindent This paper is organised as follows. Section~\ref{sec:methods} sets out the formal model, deriving the four analytical objects from the CIB cross-impact structure. Section~\ref{sec:application} describes the energy-transition application and situates each descriptor within the structural economics of low-carbon transition. Section~\ref{sec:results} presents the results, with the per-descriptor IRF as the primary focus. Section~\ref{sec:discussion} discusses limitations and directions for further development. The Supplementary Materials provide full sensitivity analyses.\\

\section{Methods}
\label{sec:methods}

\noindent The following subsections set out the formal model.\footnote{A related but distinct set of tools for extending CIB is provided by dynamic pathway approaches such as the framework of \citet{Ross2026}. These approaches run the CIB succession operator repeatedly under shocks, alternative regimes, and uncertain cross-impact matrices to reveal the distribution of attractor endpoints across runs and to quantify disequilibrium along pathways. The LRT objects described in this section are different in kind, since no amount of pathway simulation produces a perturbation budget, a cross-impact multiplier, an impulse response function, or a unit-impulse shock profile. The CIB-LRT framework and dynamic pathway methods therefore address different questions about the same cross-impact matrix, and a thorough CIB analysis can draw on both.} To make the concepts more accessible, a simple analogy is offered. \citet{Klimek2019} introduce their linear response framework with a house-of-cards analogy, picturing an economy as a house in which each sector (card) is held in place by the surrounding structure. The same picture applies directly to CIB. A consistent scenario is a configuration in which every descriptor sits in the state most strongly supported by all other descriptors simultaneously. In this arrangement, the house stands because every card (descriptor) is positioned precisely where the surrounding structure favours it. The CIM records the directional strength of each card's dependence on every other. Gently push one card and two distinct things can be read off. The first is how far that card itself moves. A card that the surrounding structure holds firmly barely shifts and is costly to displace, whereas one that the structure only weakly supports moves easily. The second is how far the rest of the house moves in response. A card on which many others depend is a structural load-bearer, so pushing it sends the widest cascade of adjustments through the surrounding cards, even if that card itself hardly moves. These two readings, a card's own resistance to being displaced and its leverage over the others, together with the order in which the cards move once the push begins, are what the analytical objects below quantify.\\

\noindent From this encoding, four questions become answerable in closed form. The first, answered by the perturbation budget, asks how much sustained, directed effort is needed to reorganise the house from one consistent configuration into a specific alternative. The second, answered by the cross-impact multiplier, asks which descriptors are the true structural load-bearers once all indirect chains of support are counted. The third, answered by the impulse response function, asks in what temporal sequence descriptors respond once the push begins, and whether any overshoot before settling. The fourth, answered by the unit-impulse shock profile, asks how a unit push on a single card (descriptor) propagates through the network at a given attractor, independently of any intended destination. Sections~\ref{sec:methods:cib}--\ref{sec:methods:irf} derive each object formally. Fuller accounts of the standard CIB method are given in \citep{Weimer-Jehle2006,Weimer-Jehle2023}. Full replication code is available at the repository identified in the Data Statement.\\

\subsection{CIB cross-impact structure}
\label{sec:methods:cib}

\noindent The CIB method is built on a CIM, a table of pairwise influence scores between descriptor states.\footnote{The impact scores $C_{ij}(k,l)$ are ordinal expert judgements, typically elicited on a scale from $-3$ to $+3$, and are summed across descriptors as if they carried interval properties they do not formally possess. The consistent scenarios identified by the argmax condition are therefore fixed points of a particular panel's collective judgement rather than empirically estimated equilibria. Uniform positive rescalings of all scores leave the consistent scenarios unchanged (this rescaling, labelled IO-3 in the invariance-operation numbering of \citet{Weimer-Jehle2006}, multiplies every off-diagonal cell by the same positive scalar $\alpha > 0$, leaving the argmax invariant), so qualitative findings are robust to that class of transformation. The same invariance is exploited in Section~\ref{sec:methods:ou} to rescale $W$ so that the drift matrix satisfies the stability condition of the LRT framework. Non-uniform monotone rescalings of the ordinal scale would alter score ratios and thereby change the consistent scenarios themselves, making a controlled robustness test for that class ill-defined within the standard CIB framework.} Let there be $N$ descriptors $d_1, \ldots, d_N$, each with a finite ordered set of states. A scenario $\mathbf{z} = (z_1, \ldots, z_N)$ assigns one state per descriptor. The CIB matrix supplies impact scores $C_{ij}(k, l)$ recording the influence exerted by state $k$ of descriptor $i$ on state $l$ of descriptor $j$ for all pairs $i \neq j$, with self-impacts excluded by convention. Given a scenario $\mathbf{z}$, the aggregate support received by state $l$ of descriptor $j$ from all other descriptors is

\begin{equation}
\theta_{j,l}(\mathbf{z}) = \sum_{i \neq j} C_{ij}(z_i, l),
\label{eq:impact}
\end{equation}

where $z_i$ is the state of descriptor $i$ in $\mathbf{z}$. A consistent scenario $\mathbf{z}^*$ is a structural equilibrium in which, for every descriptor $j$, the active state $z^*_j$ attains the maximum aggregate support,

\begin{equation}
z^*_j \in \operatorname{argmax}_{l} \;\theta_{j,l}(\mathbf{z}^*),
\label{eq:consistency}
\end{equation}

so that every descriptor is already in the state most strongly supported by the rest of the configuration. Consistent scenarios are the fixed-point attractors of the CIB succession operator and constitute the structural equilibrium landscape of the system. The LRT framework is defined at a fixed consistent scenario $\mathbf{z}^*$ and is local to that attractor. The effective CIM and drift matrix are derived in the subsections that follow.\\

\subsection{Effective cross-impact matrix and drift matrix}
\label{sec:methods:linearise}

\noindent To apply LRT, the discrete CIB state space is embedded in $\mathbb{R}^N$ via a centred encoding whereby descriptor $j$'s state index is shifted so that the consistent state $z^*_j$ maps to zero. A continuous activation variable $x_j(t) \in \mathbb{R}$ represents the deviation of descriptor $j$ from its consistent value. The effective CIM $W \in \mathbb{R}^{N \times N}$ is then extracted from the CIB matrix at $\mathbf{z}^*$ by reading off the entries corresponding to the reference scenario:

\begin{equation}
W_{ji} = C_{ij}(z^*_i, z^*_j) \quad (i \neq j), \qquad W_{jj} = 0.
\label{eq:W}
\end{equation}

\noindent The entry $W_{ji}$ records the direct cross-impact of descriptor $i$ on descriptor $j$ at the consistent scenario $\mathbf{z}^*$. The zero diagonal reflects the CIB convention that self-impacts are excluded. $W$ is in general asymmetric, since the influence of $i$ on $j$ and the influence of $j$ on $i$ are independently elicited and need not be equal. The drift matrix is defined as

\begin{equation}
M = W - I_N,
\label{eq:M}
\end{equation}

where $I_N$ is the $N \times N$ identity matrix. Subtracting $I_N$ adds a uniform restoring force of $-1$ to every descriptor's own equation, so that any descriptor displaced from its consistent state is pulled back with a strength proportional to the displacement. The off-diagonal entries of $M$ are the cross-impact scores (recording how one descriptor influences another), whilst the diagonal entries are uniformly $-1$ because the zero diagonal of $W$ means each descriptor exerts no direct self-impact in the CIB framework.\footnote{The form $M = W - I_N$ is algebraically identical to the IO drift matrix of \citet{Klimek2019}, where the hollow coupling matrix is denoted $A_0$. Substituting $W$ for $A_0$ makes all their results directly applicable to CIB without modification.}\\

\subsection{The CIB Ornstein-Uhlenbeck process}
\label{sec:methods:ou}

\noindent The activation vector $\mathbf{x}(t) = (x_1(t), \ldots, x_N(t))^\top$ is modelled as the solution of the stochastic differential equation (SDE):

\begin{equation}
\mathrm{d}\mathbf{x}(t) = M\,\mathbf{x}(t)\,\mathrm{d}t + \mathrm{d}B(t),
\label{eq:SDE}
\end{equation}

where $B(t)$ is an $N$-dimensional standard Brownian motion with identity covariance ($\mathrm{d}B(t)\,\mathrm{d}B(t)^\top = I_N\,\mathrm{d}t$). This is the CIB \citet{uhlenbeck1930theory} (OU) process. Its component form is

\begin{equation}
\mathrm{d}x_j(t) = \Bigl(\sum_{i \neq j} W_{ji}\, x_i(t) - x_j(t)\Bigr)\mathrm{d}t + \mathrm{d}B_j(t).
\label{eq:SDE_comp}
\end{equation}

\noindent The drift consists of a cross-impact term $\sum_{i \neq j} W_{ji} x_i(t)$, which propagates activation pressures from all other descriptors, and a restoring force $-x_j(t)$, which pulls descriptor $j$ back towards its consistent state. At every moment each descriptor is simultaneously pushed by the cumulative network pressure and pulled back towards equilibrium by its own restoring force. The balance between these two forces, mediated by the cross-impact network, determines whether the system is stable and how quickly it recovers from perturbation. The SDE~\eqref{eq:SDE_comp} is the CIB analogue of the IO economy SDE studied in \citet{Klimek2019}.\\ 

\noindent This is adopted here as a modelling ansatz rather than derived from first principles of CIB dynamics, given that consistent scenarios are discrete qualitative fixed points, not equilibrium states of a fluctuating physical or economic system, and the natural equilibrium fluctuations that justify the Green-Kubo formalism through the fluctuation-dissipation theorem \citep{callen1951irreversibility} are not present in a CIB attractor. The framework's value rests on the structural isomorphism between the CIB and IO drift matrices, which allows the key analytical results of \citet{Klimek2019} to transfer in closed form, and on the tractability of the resulting analytical objects.\\

\noindent The framework requires that all eigenvalues of $M$ have strictly negative real parts (the stability condition).\footnote{The stability condition is sufficient for all inversions required by the framework. When all $\operatorname{Re}(\lambda) < 0$, every eigenvalue of $\exp(M^\top t_1)$ lies strictly inside the unit circle, so none equals $1$ and $(\exp(M^\top t_1) - I_N)$ is non-singular (equations~\eqref{eq:susceptibility} and~\eqref{eq:eps} below). Stability also implies that $0$ is not an eigenvalue of $M$, so $M^{-\top}$ and hence $\Lambda = -M^{-\top}$ (the Type I cross-impact multiplier, Section~\ref{sec:methods:multiplier}) are likewise well-defined. No additional conditions are required beyond those already imposed by the IO-3 rescaling.} When this holds, the OU process admits a unique stationary distribution, which is a multivariate Gaussian with mean zero and covariance matrix $\Sigma$ satisfying $M\Sigma + \Sigma M^\top + I_N = 0$ (\citealt{Risken1989}). In practice, CIB matrices often violate this condition because the spectral radius $\rho(W)$ exceeds unity.\footnote{The symbol $\rho$ serves three distinct roles in this paper, following the notation of \citet{Klimek2019}: $\rho(W)$ denotes the spectral radius of matrix $W$; $\rho_{\mathrm{target}}$ denotes a chosen target value for that spectral radius; and $\rho(t_1)$ (introduced in Section~\ref{sec:methods:susceptibility}) denotes the susceptibility matrix as a function of the calibration horizon. All three uses are always distinguishable by their argument.} The IO-3 invariance of the CIB method (uniform positive rescaling of all cross-impact scores, which leaves the consistent scenarios unchanged) provides a natural remedy: replacing $W$ by $\alpha W$ gives $M_\alpha = \alpha W - I_N$, which is stable for any $\alpha < 1/\rho(W)$.\footnote{In practice, $\alpha$ is chosen to set a target spectral radius $\rho_{\mathrm{target}} \in (0,1)$, giving $\alpha = \rho_{\mathrm{target}}/\rho(W)$, an analyst choice with no uniquely correct value. The eigenvectors of $M_\alpha = \alpha W - I_N$ are identical to those of $W$ for any $\alpha > 0$: the substitution $W \mapsto \alpha W$ scales all eigenvalues by $\alpha$, and subtracting $I_N$ shifts each by $-1$, but neither operation rotates the eigenvector basis. It follows that the ranking of descriptors by their contribution to each eigenmode (and therefore the structural ordering of IRF responses) is invariant to the choice of $\rho_{\mathrm{target}}$, a result confirmed numerically across the full range $\rho_{\mathrm{target}} \in [0.50, 0.99]$ examined in the Supplementary Materials. When the dominant eigenvalue of $W$ is real, the slowest relaxation timescale of $M_\alpha$ equals $1/(1-\rho_{\mathrm{target}})$ exactly. When it is complex, the slowest timescale is shorter and must be read from the eigenspectrum of $M_\alpha$ directly.} In what follows, $M$ denotes the stable drift matrix; wherever $M = W - I_N$ is not stable, this means $M_\alpha$ with $\alpha$ chosen as described above.\\

\noindent The OU process~\eqref{eq:SDE} is dimensionless. Its running time $t$, the response lag $\tau$ that indexes the impulse response functions (Section~\ref{sec:methods:irf}), and the calibration horizon $t_1$ (Section~\ref{sec:methods:susceptibility}) all measure the same internal model clock, one unit of which ($\tau = 1$, equivalently $t = 1$) corresponds to neither one year nor any other calendar period. Model time is therefore not directly interpretable as a real-world duration. This issue is not specific to CIB-LRT; it applies equally to all CIB-based approaches. Attaching calendar meaning requires an external argument linking the model clock to observed transition speeds in the domain being studied, and such an argument would need to be grounded in empirical evidence outside the CIB elicitation process. One partial remedy is available within the existing CIB epistemological framework. The expert panel could be asked to supply a single temporal calibration anchor, specifically a collective judgement on the approximate real-world duration before the fastest-responding descriptor would show a meaningful response to a shock. That anchor would fix the location of the first IRF peak in calendar time, rescaling the model-time axis without altering the internally derived ordering of descriptor responses, which remains a property of the matrix eigenstructure. The resulting calendar axis carries the same epistemological status as the underlying cross-impact scores, being expert-elicited, qualitative, and conditional on panel composition. It does not constitute empirical identification of propagation speeds.\\

\subsection{Susceptibility matrix}
\label{sec:methods:susceptibility}

\noindent The susceptibility matrix $\rho(t_1)$ is the Green-Kubo transport coefficient associated with the OU process~\eqref{eq:SDE}, and is the generating object from which the four analytical objects are obtained rather than one of the four itself. It records, in matrix form, how an external constant forcing $\widetilde{\boldsymbol{\varepsilon}} \in \mathbb{R}^N$ over the interval $[0, t_1]$ translates into a cumulative activation pattern across descriptors. Its inverse applied to a target displacement $\Delta\mathbf{z} \in \mathbb{R}^N$ (the vector of state-index differences between two consistent scenarios, formalised in Section~\ref{sec:methods:budget}) defines the implied perturbation vector, which anchors the perturbation budget and impulse response function (Section~\ref{sec:methods:budget}). Its long-run limit as $t_1 \to \infty$ yields the Type I cross-impact multiplier (Section~\ref{sec:methods:multiplier}). The susceptibility matrix at horizon $t_1 > 0$ evaluates in closed form as

\begin{equation}
\rho(t_1) = M^{-\top}\bigl(\exp(M^\top t_1) - I_N\bigr),
\label{eq:susceptibility}
\end{equation}

where $M^{-\top} = (M^\top)^{-1}$ denotes the inverse transpose and $\exp(\cdot)$ denotes the matrix exponential.\footnote{A note on convention is required for readers following the derivation. The Green-Kubo construction defines $\rho(t_1)$ as an integrated equilibrium cross-correlation function. For symmetric drift this coincides with the SDE-derived mean response of $\mathbf{x}$ to constant forcing $\widetilde{\boldsymbol{\varepsilon}}$, namely $\mathbb{E}[\mathbf{x}(t_1)\mid\widetilde{\boldsymbol{\varepsilon}}] = M^{-1}(\exp(Mt_1)-I_N)\widetilde{\boldsymbol{\varepsilon}}$. For the asymmetric drift matrices that arise from CIB and from IO economics, the two objects differ by a transpose, in that the SDE response is the transpose of $\rho(t_1)\widetilde{\boldsymbol{\varepsilon}}$, not equal to it. The framework here, following \citet{Klimek2019}, takes the Green-Kubo susceptibility as the defining object, so the identity $\rho(t_1)\widetilde{\boldsymbol{\varepsilon}} = \Delta\mathbf{z}$ used in Section~\ref{sec:methods:budget} is a matrix-level relationship between implied perturbation and target displacement, not the literal time-$t_1$ mean trajectory under~\eqref{eq:SDE}. The perturbation budget, the multiplier, and the impulse response function are Green-Kubo-convention quantities and will in general differ from their SDE-convention counterparts when $M$ is asymmetric. The matrix-level interpretation of individual entries used throughout this paper, namely $\rho_{ki}(t_1) \tilde\varepsilon_i$ as the contribution of forcing on descriptor $i$ to the target displacement of descriptor $k$, is internally consistent.} The derivation proceeds through the Green-Kubo integral. The susceptibility is defined as $\rho(t_1) = \Sigma^{-1}\int_0^{t_1}\langle\mathbf{x}(0)\,\mathbf{x}(\tau)^\top\rangle_0\,\mathrm{d}\tau$ (where $\langle\cdot\rangle_0$ denotes the stationary expectation and $\tau \geq 0$ is the lag), and since $\langle\mathbf{x}(0)\mathbf{x}(\tau)^\top\rangle_0 = \Sigma\exp(M^\top\tau)$ for the OU process (Proposition~\ref{prop:crosscov} of the Supplementary Materials), substitution gives $\Sigma^{-1}\int_0^{t_1}\Sigma\exp(M^\top\tau)\,\mathrm{d}\tau = \int_0^{t_1}\exp(M^\top\tau)\,\mathrm{d}\tau = M^{-\top}(\exp(M^\top t_1) - I_N)$. The stationary covariance $\Sigma$ cancels exactly (the key technical fact enabling this application of the Green-Kubo formalism to CIB), leaving a result that depends only on the drift matrix $M$ and not on the noise structure of the OU embedding, so $\Sigma$ does not need to be estimated from data. Full proofs are given in Propositions~\ref{prop:crosscov} and~\ref{prop:susceptibility} of the Supplementary Materials.\\

\noindent The $(k,i)$ entry of $\rho(t_1)$ measures how much cumulative activation pressure accumulates on descriptor $k$ when descriptor $i$ is pushed steadily from outside the system for a period $t_1$, through all direct and indirect paths in the cross-impact network. A large entry $(k,i)$ means descriptor $i$ is an effective lever for reaching descriptor $k$, even if the direct cross-impact score $W_{ki}$ is small or zero. The standard CIB framework uses these direct, pairwise scores without aggregating their multi-step propagation through the network, and therefore does not compute this cumulative indirect influence.\\

\noindent The horizon $t_1$ is a free modelling parameter. The CIB framework assigns no physical time unit to $t$, so $t_1$ must be treated as a sensitivity parameter. At short horizons, $\rho(t_1) \approx t_1 I_N$, so each descriptor responds independently and cross-descriptor propagation is negligible. As $t_1 \to \infty$, $\rho(t_1)$ converges to the long-run multiplier $\Lambda$ (Section~\ref{sec:methods:multiplier}).\\

\subsection{Implied perturbation and perturbation budget}
\label{sec:methods:budget}

\noindent The term `budget' in the CIB literature typically refers to the combinatorial weight of an attractor's basin of attraction \citep{Weimer-Jehle2006}, specifically the count of starting scenarios that flow to a given consistent scenario under iterative succession. The perturbation budget defined here is a distinct concept (the Euclidean norm of the implied perturbation vector $\widetilde{\boldsymbol{\varepsilon}}$), measuring network-weighted transition effort rather than basin size. It quantifies the magnitude of the sustained external forcing required to achieve a given scenario displacement through the cross-impact network, using the network's own propagation structure to weight the effort.\\

\noindent Given a target activation displacement $\Delta\mathbf{z} \in \mathbb{R}^N$ from the consistent scenario (the vector of state-index differences between the origin and target consistent scenarios), the implied perturbation vector is the unique constant forcing satisfying the Green-Kubo identity $\rho(t_1)\widetilde{\boldsymbol{\varepsilon}} = \Delta\mathbf{z}$ over the calibration window $[0, t_1]$:

\begin{equation}
\widetilde{\boldsymbol{\varepsilon}} = \rho(t_1)^{-1}\,\Delta\mathbf{z} = \bigl(\exp(M^\top t_1) - I_N\bigr)^{-1} M^\top\,\Delta\mathbf{z}.
\label{eq:eps}
\end{equation}

\noindent Concretely, $\widetilde{\boldsymbol{\varepsilon}}$ specifies, for each descriptor, how hard the network must be pushed from outside to steer the system from its current scenario to the target. Descriptors for which the network is a natural amplifier require only a small push, whereas those that the network actively resists require a larger one. The Euclidean norm $\|\widetilde{\boldsymbol{\varepsilon}}\|$ collapses the full recipe into a single number (the perturbation budget), which is a network-weighted measure of the total effort required. Because the recipe is computed from the origin scenario's cross-impact structure rather than the target's, the effort of moving from A to B through network A differs from the effort of moving from B to A through network B. This asymmetry carries substantive information about the directional preferences of the cross-impact structure.\\

\noindent Two qualifications on the budget's interpretation should be stated explicitly. First, $\widetilde{\boldsymbol{\varepsilon}}$ is the unique constant forcing achieving $\Delta\mathbf{z}$ over $[0,t_1]$. The perturbation is constrained to be time-invariant. A time-varying forcing schedule could in principle achieve the same displacement with a different (possibly smaller) Euclidean norm, but the constant-forcing model is the analytically tractable benchmark consistent with the Green-Kubo framework and is not claimed to be the global minimum over all forcing shapes. The perturbation budget consequently provides an upper bound on the minimum achievable transition cost rather than the minimum itself, since an optimally shaped time-varying forcing would generally yield a lower value. Second, because the budget is derived from the linearisation at the origin scenario, it should be read as a network-weighted indicator of transition resistance, not as a precise engineering cost.\\

\noindent The perturbation budget is derived from the susceptibility matrix, which is the object from which \citet{Klimek2019} derive their economic resilience measure for IO systems. \citeauthor{Klimek2019} characterise resilience as the capacity of the IO network to absorb arriving shocks and recover, a property whose predictive value for GDP growth and recovery they demonstrate empirically. The budget uses the inverse of the susceptibility matrix applied to a specific target displacement, extending that foundation to the specific planning question of how much sustained forcing the cross-impact network requires for a given transition. A large budget reflects a network that provides little propagation support for the required descriptor changes, making the transition costly in network-weighted terms, but this is resistance to a specific directed change rather than a measure of general shock vulnerability.\\

\subsection{Cross-impact multiplier}
\label{sec:methods:multiplier}

\noindent As $t_1 \to \infty$, the susceptibility matrix converges to the long-run cross-impact multiplier (termed here the ``Type I cross-impact multiplier'')\footnote{Distinctly different from the ``cross-impact multiplier'' specified in \citet{Salo2022}. A formal proof of the limit is given in Corollary~\ref{cor:multiplier} of the Supplementary Materials.}:

\begin{equation}
\Lambda = \lim_{t_1 \to \infty} \rho(t_1) = -M^{-\top} = (I_N - W)^{-\top}.
\label{eq:multiplier}
\end{equation}

\noindent When $\rho(W) < 1$, which holds after IO-3 rescaling by construction, standard results give the convergent Neumann series $\Lambda = \sum_{n=0}^\infty (W^\top)^n$, which is the Leontief inverse of~$W^\top$ (equivalently, the transpose of the Leontief inverse of~$W$), and is the CIB analogue of the Type I IO multiplier \citep{miller2009input, Emonts-Holley02102021}, accumulating all chains of direct and indirect influence through the network (a structure formally analogous to \citealt{Katz1953} centrality in network science). Whilst in the non-negative IO setting each power $(W^\top)^k$ carries a clean interpretation as a distinct order of indirect requirements, in CIB the mixed-sign entries of $W$ mean individual powers have mixed signs and cannot be read as separate, additively interpretable rounds of positive indirect influence, so the series is best treated as an aggregate rather than decomposed term by term. The LRT derivation, arriving at $\Lambda$ as the long-run limit of the susceptibility integral (Section~\ref{sec:methods:susceptibility}), supplies a dynamical interpretation (cumulative activation response to sustained forcing) that the algebraic form alone does not provide.\footnote{The Type I cross-impact multiplier allows practitioners to rank descriptors by their true system-wide influence, accounting for all feedback chains rather than only direct cross-impacts. Conversely, a large direct impact that is attenuated in the multiplier signals feedback cancellation within the network. A further practitioner use is rapid identification of network insulators, namely descriptors that show near-zero response to a shock in descriptor~$i$ across the full time horizon and are therefore structurally decoupled from~$i$ within the network at that attractor.}\\

\noindent The $(k,i)$ entry of $\Lambda$ gives the total cumulative activation that accumulates on descriptor $k$ when descriptor $i$ is pushed steadily for a very long time, once all network effects have fully propagated. A large entry where the direct score $W_{ki}$ is small or zero signals a strong indirect connection through intermediary descriptors. A negative diagonal entry $\Lambda_{jj} < 0$ is particularly striking, as it means that pushing descriptor $j$ through the network ultimately reduces its own long-run activation, because the indirect feedback reverses the direction of the effect. Such feedback reversals are not quantifiable from the raw cross-impact scores and can only be identified through the multiplier analysis. Whilst the multiplier captures these indirect effects as a long-run aggregate, the impulse response function (Section~\ref{sec:methods:irf}) resolves the same propagation dynamically, tracing the temporal sequence of descriptor adjustments and identifying any transient overshoots along the way.\\

\subsection{Impulse response function}
\label{sec:methods:irf}

\noindent The CIB impulse response function (IRF) at lag $\tau \geq 0$ is the $N$-vector of descriptor activation pressures arising from the implied perturbation $\widetilde{\boldsymbol{\varepsilon}}$\footnote{This follows from the Green-Kubo response formula and the cross-covariance result by the same $\Sigma$-cancellation as the susceptibility; a proof is given in Corollary~\ref{cor:irf} of the Supplementary Materials.}:

\begin{equation}
R(\tau) = \exp(M^\top \tau)\,\widetilde{\boldsymbol{\varepsilon}}.
\label{eq:IRF}
\end{equation}

\noindent Here $R_j(\tau)$ is the activation pressure on descriptor $j$ at time $\tau$ after the push begins. Descriptors that are directly pushed or strongly coupled to the push through the network feel the pressure immediately (large $R_j$ at small $\tau$), while those reached only through long indirect chains respond later. The IRF integrates to the target displacement over $[0, t_1]$ (Corollary~\ref{cor:irf} of the Supplementary Materials), so the area under each descriptor's curve equals the state-index change prescribed for that descriptor. The time-resolved shape shows the order in which descriptors would need to move in a managed transition.\footnote{$R(\tau)$ is evaluated numerically by exploiting the semigroup property $\exp(M^\top(\tau + \Delta t)) = \exp(M^\top \Delta t) \cdot \exp(M^\top \tau)$. A single matrix exponential $\exp(M^\top \Delta t)$ is computed once, and successive time steps are obtained by repeated left-multiplication.}\\

\noindent The IRF can be decomposed into a set of exponentially decaying or oscillating-decaying patterns (eigenmodes), each with its own decay rate and, where the eigenvalues of $M$ are complex, its own oscillation frequency. Each eigenmode is a particular combination of descriptors that all fade at the same speed, and complex eigenmodes produce additional transient oscillations before they decay. Some modes decay quickly, with descriptors settling in a short period, while others decay slowly, leaving descriptors under pressure for a long time. A descriptor with a large weight in the slowest mode is the hardest to settle and the most persistent signal of an ongoing transition, whereas one with weight only in fast modes responds early and resolves quickly. This provides a temporal ranking of descriptor responses that no standard CIB analysis produces.\\

\noindent The fourth analytical object, the unit-impulse shock profile, arises when the perturbation is not calibrated to a particular scenario transition but instead consists of a unit push on a single descriptor $i$. Setting $\widetilde{\boldsymbol{\varepsilon}} = \mathbf{e}_i$ (the $i$-th standard basis vector in $\mathbb{R}^N$) yields the unit-impulse IRF for descriptor~$i$ (the closed form of that profile),

\begin{equation}
R^{(i)}(\tau) = \exp(M^\top \tau)\,\mathbf{e}_i,
\label{eq:unit_irf}
\end{equation}

which is simply the $i$-th column of the matrix exponential $\exp(M^\top\tau)$, evaluated at lag $\tau$. The integral of~\eqref{eq:unit_irf} over $[0, t_1]$ recovers the $i$-th column of the susceptibility matrix:

\begin{equation}
\int_0^{t_1} R^{(i)}(\tau)\,\mathrm{d}\tau
= M^{-\top}\bigl(\exp(M^\top t_1) - I_N\bigr)\mathbf{e}_i
= [\rho(t_1)]_{\cdot,\,i},
\label{eq:unit_irf_integral}
\end{equation}

where $[\rho(t_1)]_{\cdot,\,i}$ denotes the $i$-th column of the susceptibility matrix defined in~\eqref{eq:susceptibility}. Equations~\eqref{eq:unit_irf} and~\eqref{eq:unit_irf_integral} establish the precise relationship between the two analytical objects. The susceptibility matrix records the cumulative integral of all possible unit-impulse IRFs, and the unit-impulse IRF is the time-resolved version of a single column of the susceptibility matrix. The cross-scenario IRF introduced in~\eqref{eq:IRF} is a projection of this same structure onto the direction defined by the target displacement, given by $R(\tau) = R^{(1)}(\tau)\tilde\varepsilon_1 + \cdots + R^{(N)}(\tau)\tilde\varepsilon_N = \exp(M^\top\tau)\widetilde{\boldsymbol{\varepsilon}}$, where $\tilde\varepsilon_j$ denotes the $j$-th component of $\widetilde{\boldsymbol{\varepsilon}}$ (computed from~\eqref{eq:eps}).\\

\noindent The unit-impulse shock profile thus provides a descriptor-level decomposition of the cross-impact network's dynamic response at a given attractor, independent of any specified target transition. Because $W$, and hence $M$, is evaluated at a specific consistent scenario $\mathbf{z}^*$, the unit-impulse IRF~\eqref{eq:unit_irf} is attractor-specific, in that the same push on descriptor~$i$ propagates differently at each consistent scenario, and comparing the profiles across attractors yields a structural fingerprint of each scenario's local network sensitivity. Since the susceptibility matrix is the object from which \citet{Klimek2019} derive their economic resilience measure for IO systems, and the unit-impulse IRF is the time-resolved version of a column of that susceptibility matrix, the unit-impulse analysis is the CIB object most directly analogous to the resilience framing of \citet{Klimek2019}. A shock arrives at a single descriptor without any specified target transition, and the profile of $R^{(i)}(\tau)$ reveals how the cross-impact network at that attractor propagates and absorbs it.\\

\noindent The IRF integrates to give the cumulative displacement over any evaluation horizon $t_2$:

\begin{equation}
\begin{split}
\int_0^{t_2} R(\tau)\,\mathrm{d}\tau
&= \int_0^{t_2}\exp(M^\top\tau)\,\mathrm{d}\tau\cdot\widetilde{\boldsymbol{\varepsilon}} \\
&= M^{-\top}\bigl(\exp(M^\top t_2)-I_N\bigr)\widetilde{\boldsymbol{\varepsilon}}
= \rho(t_2)\,\rho(t_1)^{-1}\,\Delta\mathbf{z},
\end{split}
\label{eq:IRF_integral}
\end{equation}

where the second step uses the standard matrix-exponential integral identity and the third substitutes $\rho(t_2) = M^{-\top}(\exp(M^\top t_2)-I_N)$ and $\widetilde{\boldsymbol{\varepsilon}} = \rho(t_1)^{-1}\Delta\mathbf{z}$. At $t_2 = t_1$ this equals $\Delta\mathbf{z}$ by construction (the perturbation delivers exactly the target displacement over the calibration horizon). For $t_2 < t_1$ it gives the fraction of the transition completed at an earlier check-point, and for $t_2 > t_1$ it gives the extrapolated overshoot.\\

\section{Application}
\label{sec:application}

\noindent The framework is applied to a 15-descriptor energy-transition CIM from \citet{RossAGandAM2026}. The CIB-LRT framework is applicable to any cross-impact matrix that admits consistent scenarios. The energy-transition matrix provides a concrete, empirically relevant example that demonstrates the range of structural insights the framework offers to CIB practitioners. The matrix encodes the structural interdependencies among the principal dimensions of a low-carbon transition, spanning regulatory, technological, infrastructural, financial, and demand-side determinants of decarbonisation, with three ordered states per descriptor. Five consistent scenarios are identified using PyCIB \citep{RossAG2026pycib}, denoted sc01 to sc05 and summarised in Table~\ref{tab:scenarios}. The five scenarios fall into three structural groupings, comprising two fossil-locked configurations (sc01, sc02), two intermediate configurations (sc03, sc04), and one net-zero aligned configuration (sc05). Within each of the two near-degenerate pairs, sc01/sc02 and sc03/sc04, the scenarios differ on a single descriptor (Hamming distance $H = 1$; \citealt{hamming1950error}; see also \citealt{Ross2026}) and share nearly identical cross-impact structures.\\

\begin{table}[htbp]
\centering
\caption{Consistent scenarios for the energy-transition case study.}
\label{tab:scenarios}
\footnotesize
\begin{tabular}{llp{8.0cm}}
\toprule
Label & Broad character & Key features \\
\midrule
sc01 & Locked-in-fossil & Low Policy Stringency, high Energy Security Pressure, low Renewables Deployment, high Technology Costs. Land Use Conflict: Low. \\[3pt]
sc02 & Locked-in-fossil & As sc01; differs only on Land Use Conflict (Medium). \\[3pt]
sc03 & Intermediate & Descriptors predominantly at medium ordinal levels. Technology Costs: Low, Permitting Pace: Medium. \\[3pt]
sc04 & Intermediate & As sc03; differs only on Permitting Pace (Fast). \\[3pt]
sc05 & Net-zero aligned & High Policy Stringency, high Renewables Deployment, major Hydrogen Role, low Technology Costs; Industrial Energy Demand at medium. \\
\bottomrule
\end{tabular}
\smallskip\par\noindent
{\footnotesize Source: \citet{RossAGandAM2026}. Consistent scenarios identified using PyCIB \citep{RossAG2026pycib}.}
\end{table}

\noindent Three free parameters govern the implementation. These are the IO-3 target spectral radius $\rho_{\mathrm{target}} = 0.9$, the calibration horizon $t_1 = 1.0$ model units, and the Monte Carlo noise scale $\sigma = 0.5$ ordinal units. Because the spectral radius of the raw effective CIM exceeds unity at every attractor (values in the range 10.3--17.1 across the five scenarios), IO-3 rescaling is applied at each attractor, with scenario-specific rescaling factor $\alpha$ in the range 0.053--0.087. All LRT objects are computed from the rescaled matrix. The sensitivity of all three parameter choices is examined in Section~\ref{sec:results:sensitivity} and the Supplementary Materials. Uncertainty ribbons are constructed by Monte Carlo simulation over the cross-impact scores. For each of 10,000 realisations, independent Gaussian noise ($\sigma = 0.5$) is added to every off-diagonal entry of $W$, IO-3 rescaling is re-applied, and the IRF is recomputed. Realisations in which $M$ is unstable or the linear system is singular are discarded. All 10,000 realisations are accepted in this application. The 5th--95th percentile band across accepted realisations forms the shaded confidence ribbon in each figure.\\

\section{Results}
\label{sec:results}

\noindent The results presented in this section are not intended as an exhaustive empirical analysis of the energy transition case study, but as a selective illustration of the structural insights that the CIB-LRT framework makes accessible to CIB practitioners. The aim is to demonstrate the range of analytical questions that the framework permits, none of which standard CIB can address. The results are organised in four parts. Section~\ref{sec:results:irf} presents the per-descriptor impulse response functions for the sc05$\to$sc03 transition. Section~\ref{sec:results:shock} presents the unit-impulse shock analysis, which characterises the network-propagated response to a unit push on a single descriptor at a fixed attractor. Section~\ref{sec:results:ensemble} examines all twenty directed transition pairs through the IRF norm ensemble and perturbation budgets. Section~\ref{sec:results:susceptibility} illustrates how indirect influence accumulates as the calibration horizon lengthens. Sensitivity and robustness diagnostics are summarised in Section~\ref{sec:results:sensitivity} and detailed in the Supplementary Materials.\\

\subsection{Per-descriptor impulse response trajectories}
\label{sec:results:irf}

\noindent Figure~\ref{fig:irf_sc05_sc03} shows the impulse response functions for the sc05$\to$sc03 transition (net-zero to intermediate). This transition is selected as the primary illustration because the sc05 attractor exhibits the most analytically distinctive eigenstructure of the five consistent scenarios, including negative cross-impact multiplier diagonals that produce the sign reversal discussed below, making it the most revealing anchor for demonstrating the full range of IRF features. Structurally, the system retreats from a fully decarbonised configuration where the principal dimensions of decarbonisation are mutually reinforcing, moving towards an intermediate state in which most descriptors settle at medium levels. Each line traces the activation pressure on one of the fifteen descriptors from $\tau = 0$ until the system has settled. All $\tau$ values are dimensionless model units, with no mapping to calendar years available without an external calibration argument that CIB does not provide. The solid lines use the agreed cross-impact scores. The shaded ribbons span the 5th--95th percentile of 10,000 Monte Carlo realisations.\\

\noindent One formal constraint governs all fifteen descriptor trajectories. The area under each descriptor's curve over the calibration horizon $[0, t_1]$ must equal the total state-index shift prescribed for that descriptor ($\Delta z_j$), so a descriptor that does not change between the two scenarios has zero net area, whilst one that moves two ordinal levels has twice the area of one that moves one level. It is the shape of the trajectory (its timing, direction, and any overshoots) that standard CIB cannot produce and that the impulse response supplies.\\

\noindent Two features of Figure~\ref{fig:irf_sc05_sc03} merit explicit explanation. First, why do some curves start above zero and others below? The sign of each curve's starting value gives the direction of the initial push the cross-impact network must apply to that descriptor to drive the transition to the target attractor. Crucially, this is not determined simply by whether the descriptor increases or decreases between the two scenarios, because indirect network paths can route the required adjustment in directions that look counterintuitive from the raw state differences alone. A descriptor whose net state change is zero may still start away from zero, if the network routes pressure through it as a transient intermediary. Its integrated area will nonetheless sum to zero. Second, why do all fifteen trajectories converge to zero? Zero activation pressure means the transition is complete. The network has absorbed the perturbation and settled at the new attractor. Convergence is guaranteed by the stability of the cross-impact structure, which ensures that any initial push must eventually dissipate. Zero is therefore not an arbitrary baseline but the resting point of a stable network.\\

\noindent For most descriptors, the sc05$\to$sc03 transition produces a smooth, monotone decay from the initial activation pressure towards zero. This behaviour reflects the eigenmode structure of sc05, which decomposes into one slow mode, with a relaxation timescale of ten model units set by the IO-3 rescaling target, and fourteen faster modes that resolve within a few model units. Slow-settling descriptors carry a gently decaying tail across the full horizon and are the natural candidates for ongoing structural monitoring, as their persistence reflects the CIM's eigenstructure rather than any particular parameter choice.\\

\begin{figure}[!ht]
    \centerline{%
    \includegraphics[keepaspectratio]{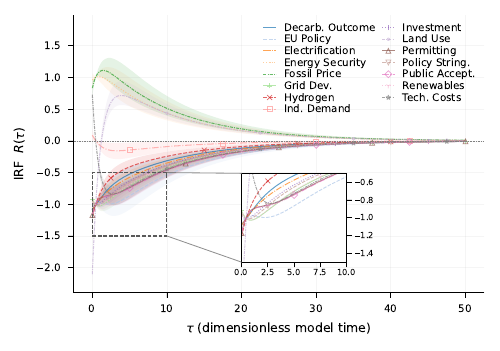}
    }
    \caption{Descriptor adjustment trajectories for the sc05$\to$sc03 transition. Per-descriptor impulse response functions $R_j(\tau) = [\exp(M^\top\tau)\widetilde{\boldsymbol{\varepsilon}}]_j$ for the transition sc05$\to$sc03 (net-zero to intermediate, Permitting Pace Medium). Hamming distance 13; perturbation budget $\|\widetilde{\boldsymbol{\varepsilon}}\| = 4.22$. Each line corresponds to one descriptor, giving the deterministic IRF from the agreed cross-impact scores. Shaded ribbons span the 5th--95th percentile of $N = 10{,}000$ Monte Carlo realisations for all fifteen descriptors (noise $\sigma = 0.5$ per off-diagonal CIM entry). Calibration horizon $t_1 = 1.0$; time axis in dimensionless model units (not calendar time). The area under each descriptor's curve over $[0, t_1]$ equals the target state-index displacement $\Delta z_j$ by construction.}
    \label{fig:irf_sc05_sc03}
\end{figure}

\noindent The Energy Security Pressure curve provides the most striking result. Its trajectory begins positive, rises to a local peak, then reverses sign and decays from below before returning to zero. This reversal is a direct consequence of a feedback loop in the cross-impact structure at the net-zero attractor, captured by the negative diagonal entry of the Type I cross-impact multiplier ($-0.39$ at sc05). Within the OU model, sustained pressure on this descriptor winds back its own cumulative activation through indirect cross-impacts. Nothing in the raw CIM scores hints at this; only the IRF reveals it. In a managed retreat from the net-zero scenario, Energy Security Pressure would therefore first overshoot its target level before correcting, a transient that any monitoring framework should anticipate rather than misread as failure.\\

\noindent The Fossil Price Pressure curve is the second to rise above zero, but its mechanism differs fundamentally from the Energy Security case. Whereas Energy Security Pressure reverses sign through a self-reinforcing feedback loop, the Type I cross-impact multiplier diagonal for Fossil Price Pressure at sc05 is approximately zero ($0.003$, compared with $-0.39$ for Energy Security Pressure), confirming that no equivalent reversal is present. The positive trajectory has a structurally direct cause. Specifically, at the net-zero attractor, the Low state of Fossil Price Pressure is held in place not by intrinsic stability but by a coalition of reinforcing cross-impacts, including High Policy Stringency, Low Energy Security Pressure, and the Net-zero Decarbonisation Outcome, among others. The sc05$\to$sc03 transition dismantles this coalition simultaneously, as each member shifts away from its net-zero state and withdraws its reinforcing contribution.\\

\noindent The upward pulse in the Fossil Price Pressure IRF is therefore the compounded consequence of multiple structural supports being withdrawn at once rather than any single direct cause. The net-zero configuration is, in this sense, a house of cards in that it stands because all its components mutually support one another, and moving away from it means withdrawing that entire mutual support structure at once. Unlike a physical house of cards, the collapse is not instantaneous. The impulse response shows, step by step, how each component adjusts as the shared structure dissolves, a temporal resolution that standard CIB cannot provide. More detailed data on the multiplier diagonal entries for all fifteen descriptors at each attractor are given in Supplementary Table~\ref{tab:lambda_diag}.\\

\noindent The Monte Carlo ribbons are narrow throughout, confirming that the qualitative shape of every descriptor's trajectory, including the direction of the Energy Security Pressure reversal, is a structural invariant of the cross-impact network rather than an artefact of the specific numerical scores. The robustness of this finding is discussed in Section~\ref{sec:results:sensitivity}.\\

\subsection{Unit-impulse shock responses and attractor-specific sensitivity profiles}
\label{sec:results:shock}

\noindent The cross-scenario IRF in Section~\ref{sec:results:irf} characterises the system's response to a calibrated perturbation aimed at a specific scenario transition. Figure~\ref{fig:shock_sc05_policy} addresses a structurally distinct question. What is the network-propagated response when a unit push is applied to a single descriptor independently of any target transition? This is the unit-impulse IRF~\eqref{eq:unit_irf}, evaluated at the net-zero attractor (sc05) for $i = \text{Policy Stringency}$. The figure shows the response curves for the fourteen non-shocked descriptors, so that the full propagation pattern of a single-descriptor shock through the network is visible at once.\\

\begin{figure}[!ht]
    \centerline{%
    \includegraphics[keepaspectratio]{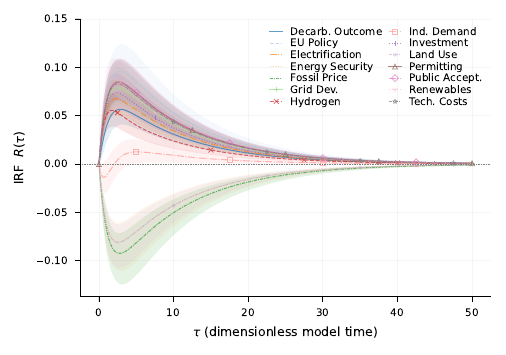}
    }
    \caption{Network shock propagation from a Policy Stringency impulse at sc05. Unit-impulse shock response $R^{(i)}(\tau) = \exp(M^\top\tau)\mathbf{e}_i$ at the net-zero attractor (sc05) for $i = \text{Policy Stringency}$. The shocked descriptor itself is not plotted; each line shows the deterministic response of one of the fourteen remaining descriptors. Shaded ribbons span the 5th--95th percentile of $N = 10{,}000$ Monte Carlo realisations ($\sigma = 0.5$ per off-diagonal CIM entry; all realisations accepted).}
    \label{fig:shock_sc05_policy}
\end{figure}

\noindent Four structural features of Figure~\ref{fig:shock_sc05_policy} are directly readable from the curves. First, three of the fourteen non-shocked descriptors respond negatively (Fossil Price Pressure, Land Use Conflict, and Energy Security Pressure), whilst eleven co-move positively, with the negative responses reflecting network routing at the net-zero attractor rather than direct links. Second, the strongest positive responders peak within the first few model units before decaying, reflecting the fast-mode structure of the net-zero drift matrix, in which most indirect propagation resolves quickly. Third, Industrial Energy Demand is almost entirely unresponsive, consistent with its peripheral structural position, and the unit-impulse IRF thus distinguishes central from peripheral descriptors in a way that raw CIM scores do not. Fourth, the Monte Carlo ribbons are narrow throughout, confirming that the qualitative propagation pattern is a robust structural property of the network.\\

\noindent The attractor-specificity of the unit-impulse profile is the analytically distinctive feature that no other CIB analysis provides. The same push on Policy Stringency applied at the fossil-locked attractor (sc01) would propagate through a structurally different cross-impact network, where Policy Stringency sits at its Low state, its reinforcing coalition is absent, and the dominant timescale is substantially shorter than at sc05. The unit-impulse IRF profile is therefore a structural fingerprint of each attractor, and comparing profiles across consistent scenarios reveals how the same external push encounters different structural contexts, with correspondingly different amplification patterns and recovery timescales. The house-of-cards analogy applies directly, since the same card plays a fundamentally different role depending on which house it stands in, and the unit-impulse IRF quantifies exactly how different those structural positions are. This attractor-conditional sensitivity analysis is not available through standard or dynamic CIB, because neither resolves the time-structured propagation of a descriptor-level shock within a given structural configuration.\\

\subsection{Scenario transition clusters, asymmetry, and perturbation budgets}
\label{sec:results:ensemble}

\noindent Figure~\ref{fig:irf_ensemble} collects the IRF norm $\|R(\tau)\|_2$ for all twenty directed transitions on a shared time axis. The norm aggregates activation pressure across all fifteen descriptors at each lag into a single number measuring how much total pressure the entire network is under at this moment in the transition. The twenty curves partition naturally into three groups. The three-cluster grouping is immediately interpretable through the house-of-cards picture. Moving between nearly identical scenarios is like moving a single card whilst the rest of the configuration stands intact, whereas moving between the fossil-locked and net-zero scenarios means dismantling one house entirely and building a structurally different one from scratch, which is why those transitions carry the largest budgets. This analogy describes the global structural contrast between configurations. However, the mathematical apparatus operates on the local linearisation at each endpoint, so budget values for large-displacement transitions are structural indicators at the reference attractor rather than estimates of the full nonlinear transition cost.\\

\noindent The two near-degenerate pairs (sc01 $\leftrightarrow$ sc02 and sc03 $\leftrightarrow$ sc04, each with Hamming distance 1) produce the smallest norms and the flattest trajectories. Because these pairs differ on only a single descriptor, the implied perturbation is concentrated on that descriptor alone. The fourteen-descriptor propagation structure contributes little, and the norm curves remain low throughout the horizon. The near-identical forward and reverse curves within each pair confirm that the network treats single-descriptor transitions almost symmetrically. Even at $H = 1$, a small directional asymmetry is detectable in the fossil-locked pair, illustrating that the framework captures directionality that the Hamming distance (as employed in \citet{Ross2026}) cannot. Full values are in Supplementary Table~\ref{tab:budgets}.\\

\begin{figure}[!ht]
    \centerline{%
    \includegraphics[keepaspectratio]{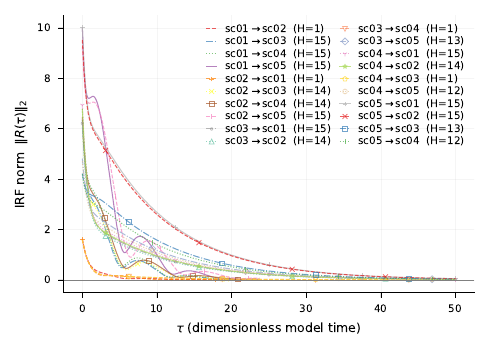}
    }
    \caption{Transition effort and cluster structure across all 20 scenario pairs. IRF norm $\|R(\tau)\|_2$ for all 20 directed scenario pairs. Each curve represents one directed transition and is labelled by origin and target scenario code together with the Hamming distance $H$. Curves partition into three natural groups: (i) near-degenerate pairs (sc01$\leftrightarrow$sc02 and sc03$\leftrightarrow$sc04, $H = 1$), with the smallest norms; (ii) within-cluster and intermediate transitions; (iii) cross-cluster transitions between the locked-in-fossil cluster (sc01, sc02) and the net-zero scenario (sc05), with the largest norms. See Supplementary Table~\ref{tab:budgets} for the complete budget values.}
    \label{fig:irf_ensemble}
\end{figure}

\noindent The cross-cluster transitions between the locked-in-fossil cluster (sc01, sc02) and the net-zero scenario (sc05) produce the highest norms, with peak values of 9--10. These curves also exhibit a faster early decay for transitions originating at sc01 or sc02, whose dominant timescales are shorter than sc05's slow ten-unit mode, whilst transitions originating at sc05 show a correspondingly slower, longer-tailed decay. This qualitative difference in curve shape by origin scenario is a direct consequence of the attractor's eigenstructure.\\

\noindent Budget asymmetry is directly readable from the norm curves in Figure~\ref{fig:irf_ensemble}. The sc05$\to$sc03 direction (budget $\|\widetilde{\boldsymbol{\varepsilon}}\| = 4.22$) meets around 14\,\% less resistance than the reverse (sc03$\to$sc05, budget $4.82$), illustrating that the direction of a transition matters and is captured by the framework. Pairs with the same maximum Hamming distance of 15 span a budget range from roughly 6 to roughly 10, confirming that the perturbation budget captures structural information entirely absent from the Hamming distance alone. Near-symmetry in both directions, as seen for the sc01$\leftrightarrow$sc05 polar pair, is an empirical property of this CIM rather than a structural requirement of the framework. Full values for all twenty pairs are given in Supplementary Table~\ref{tab:budgets}.\\

\noindent The calibration-horizon sweep in Section~\ref{sec:results:sensitivity} confirms that the three-cluster grouping is preserved across $t_1 \in [0.5, 10.0]$, with near-degenerate pairs always carrying the smallest budgets and polar cross-cluster pairs always the largest. Within clusters, reorderings occur between closely matched pairs at several horizons, reflecting the broader observation that the perturbation budget for any given transition is not a fixed quantity but varies with the calibration horizon $t_1$. Budget curves are non-monotone, falling from relatively high values at short horizons to an intermediate minimum before rising again toward the long-run limit, so each transition has a horizon at which it is least costly to pursue. Because this shape differs across pairs, the relative attractiveness of competing transition pathways shifts with the planning horizon. Standard CIB, which assigns no time dimension to scenario transitions, cannot represent any of this and provides instead a single fixed measure of inter-scenario distance insensitive to the timeframe of intervention.\\

\noindent A two-hop cross-check decomposes each polar transition into two sequential steps routed via the intermediate attractor, re-anchoring the linearisation there and summing the two local budgets. This tests whether the direct budget is sensitive to the choice of anchor point. Full details are in Section~\ref{sec:S4} of the Supplementary Materials. The two-hop estimates exceed the direct budgets, reflecting structural heterogeneity across attractors rather than numerical error, and the three-cluster grouping is preserved under re-anchoring.\\

\subsection{Indirect influence accumulation over time}
\label{sec:results:susceptibility}

\noindent Figure~\ref{fig:susceptibility} visualises the susceptibility matrix at the fossil-locked attractor (sc01) as a directed network at a short calibration horizon ($t_1 = 1$, panel a) and a long one ($t_1 = 5$, panel b). The two panels together show how indirect influence accumulates as the horizon lengthens, with node size encoding each descriptor's total outgoing influence through the network, and the contrast between panels reveals which descriptors become disproportionately leveraged over time.\\

\begin{figure}[!ht]
    \centerline{%
    \includegraphics[keepaspectratio]{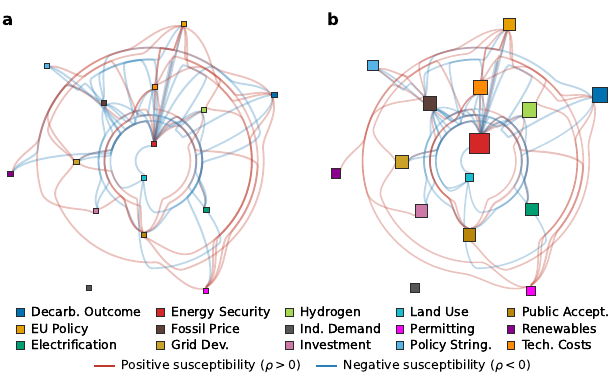}
    }
    \caption{Direct and indirect influence accumulation at sc01. Susceptibility network $\rho(t_1) = M^{-\top}(\exp(M^\top t_1) - I_N)$ at sc01 at calibration horizons $t_1 = 1$ (panel \textbf{a}, short horizon) and $t_1 = 5$ (panel \textbf{b}, long horizon). Each node is a descriptor; colours are for identification only. Node side length scales linearly with total outgoing influence (column sum of $|\rho|$: $\sum_{k \neq i}|\rho_{ki}|$), on a shared scale common to both panels, so node sizes are directly comparable across panels. The directed edge from descriptor $i$ to descriptor $k$ encodes $\rho_{ki}$: the cumulative activation of $k$ from a unit sustained push on $i$ over $[0, t_1]$ through the full network. Red edges indicate positive susceptibility; blue indicate negative. The top $25\,\%$ of all 210 directed pairs by $|\rho_{ki}|$ is shown, applied independently within each panel (about 50 edges per panel).}
    \label{fig:susceptibility}
\end{figure}

\noindent At $t_1 = 1$ (Figure~\ref{fig:susceptibility}, panel a), indirect influence has had little time to propagate and the susceptibility is concentrated in strong direct cross-impact links. The structure is relatively even. No single descriptor dominates, and most displayed edges reflect first-order interactions rather than multi-step feedback chains.\\

\noindent At $t_1 = 5$ (panel b), the network becomes substantially denser and nodes grow larger, reflecting the accumulation of indirect influence. In house-of-cards terms, the susceptibility analysis identifies which factor in the fossil-locked configuration carries the most structural weight, understood as the descriptor whose movement triggers the widest cascade of adjustments through the surrounding structure. Energy Security Pressure is that factor, with column-sum $\sum_{k \neq i}|\rho_{ki}| = 9.94$, the largest across all descriptors. A unit perturbation of this descriptor generates the largest cumulative activation in EU Policy Alignment, Public Acceptance, and Permitting Pace, and a substantial reversal in Fossil Price Pressure. Raw CIM scores alone would not identify this property, encoding only the first step of propagation. The long-horizon susceptibility network instead exposes the full indirect cascade and ranks descriptors by their network leverage.\\

\noindent The contrast between the two panels in Figure~\ref{fig:susceptibility} also illustrates why the calibration horizon $t_1$ is a substantively important modelling choice. At $t_1 = 1$, interventions appear largely independent and their collateral effects on other descriptors are small. At $t_1 = 5$, the same interventions trigger substantial collateral movement across the network. The calibration-horizon sweep in Section~\ref{sec:results:sensitivity} shows that the transition from sparse to dense indirect-influence structure occurs progressively, and that Energy Security Pressure's dominance as the leading indirect lever is present across the full range of calibration horizons examined.\\

\subsection{Sensitivity and robustness}
\label{sec:results:sensitivity}

\noindent Three modelling parameters are treated as free choices throughout: the IO-3 rescaling target $\rho_{\mathrm{target}}$, the calibration horizon $t_1$, and the Monte Carlo noise scale $\sigma$. In all three cases, qualitative findings are robust. Full details are in Section~\ref{sec:SA} of the Supplementary Materials. Taken together, the robustness results confirm the house-of-cards picture. Small imprecisions in where individual cards are placed do not change which configurations are fundamentally different from one another, and the three-cluster ranking holds throughout.\\

\section{Discussion and Conclusions}
\label{sec:discussion}

\noindent This paper extends standard Cross-Impact Balance (CIB) analysis through Linear Response Theory (LRT), deriving from the elicited cross-impact matrix alone four closed-form analytical objects that existing CIB approaches do not provide. These are the perturbation budget, the Type I cross-impact multiplier, the impulse response function, and the unit-impulse shock profile. Obtaining all four without a succession step distinguishes CIB-LRT from succession-based and Markov-chain approaches, which add pathway dynamics but treat expert-elicited qualitative scores as identified propagation coefficients, a conflation that CIB-LRT avoids. The following paragraphs discuss the contribution of each object in turn before addressing limitations and directions for further development.\\

\noindent The impulse response function (IRF) is a key contribution to CIB practice. It recovers the ordering in which descriptors respond to a transition pressure from the eigenstructure of the drift matrix, an intrinsic network invariant that succession-based approaches cannot produce because their synchronous update protocol imposes the same timescale on every descriptor. Succession-based update steps also carry no elicited calendar meaning, since the duration of each iteration is not defined within the CIB framework, a limitation common to all dynamic CIB approaches, addressed as the fourth limitation below. The IRF can also expose sign reversals, in which a descriptor's activation pressure temporarily moves counter to its ultimate transition direction before correcting, a consequence of feedback loops in the network that are invisible in the raw cross-impact scores. Identifying such reversals in advance matters directly for monitoring framework design.\\

\noindent The Type I cross-impact multiplier identifies reversals of this kind. A negative diagonal entry $\Lambda_{jj}$ (the cumulative long-run activation that descriptor $j$ accumulates from sustained pressure on itself, propagated through all indirect feedback chains in the network) indicates, within the Ornstein-Uhlenbeck (OU) model, that this pressure ultimately reverses the descriptor's own long-run activation. Because this property differs across consistent scenarios, the multiplier provides an attractor-specific structural fingerprint that no other CIB object supplies, allowing practitioners to anticipate which descriptor trajectories will overshoot their target direction before eventually correcting. The unit-impulse shock profile extends this attractor-specificity further, characterising how the same external push propagates differently depending on which structural equilibrium the system occupies.\\

\noindent The perturbation budget complements trajectory objects by quantifying transition resistance as a directed, network-weighted scalar, capturing structural information missed by the simple Hamming distance (the count of differing descriptors between two scenarios). Unlike the symmetric Hamming distance, the perturbation budget is asymmetric across directed scenario pairs, reflecting directional preferences inherent in the cross-impact structure at the origin attractor. The required effort for a transition, however, is not a fixed value but depends on the timeframe in which it is undertaken, and varies between scenario pairs. Thus, practitioners gain insights beyond what the Hamming distance offers, as standard analysis records only the number of differences, not how transition effort evolves with the planning horizon. These capabilities are illustrated in the energy-transition application.\\

\noindent The CIB-LRT framework and dynamic pathway simulation address different questions about the same cross-impact matrix and are designed to complement rather than compete with one another. The former produces closed-form structural invariants, while the latter reveals which attractors the system reaches under shocks, alternative regime matrices, and uncertain scores. Used in combination, the two approaches are mutually reinforcing. The multiplier identifies the most effective indirect levers at each attractor, guiding the selection of descriptors to target when constructing regime matrices or applying structural shocks. The perturbation budget ranks all directed attractor pairs by network-weighted transition effort, providing a principled basis for choosing which transitions to stress-test. An unexpected attractor endpoint surfaced by pathway simulation can in turn be characterised analytically. The budget quantifies how much sustained network forcing that transition requires, and the impulse response shows which descriptors bear the pressure first.\\

\noindent Four limitations should be borne in mind. The Ornstein-Uhlenbeck embedding is adopted as the minimal continuous-time embedding consistent with the CIB drift structure rather than derived from CIB first principles (Section~\ref{sec:methods:ou}). As shown there, the stationary covariance cancels from the susceptibility derivation, so the four analytical objects depend only on the drift matrix $M$ and are therefore not sensitive to the specific noise structure of the embedding. They are, however, structural properties of the linearisation at the reference attractor rather than empirically identified predictions of adjustment speed or transition cost. All four LRT objects are local approximations formally valid near the reference equilibrium. Nevertheless, the two-hop cross-check in the Supplementary Materials confirms that broad rankings are preserved under re-anchoring, but results involving large scenario displacements should be read as structural indicators rather than precise trajectory predictions. The three free modelling parameters (the IO-3 rescaling target spectral radius $\rho_{\mathrm{target}}$, the calibration horizon $t_1$, and the Monte Carlo noise amplitude $\sigma$) should always be reported and sensitivity-tested alongside results, as Section~\ref{sec:results:sensitivity} demonstrates. A fourth limitation extends beyond CIB-LRT to all dynamic CIB approaches that do not explicitly elicit a time unit from the expert panel. The time axis is dimensionless, and converting it to calendar duration requires empirical grounding that the elicitation cannot supply. Succession-based and Markov-chain methods face the same gap, which the existing literature has not acknowledged, and their synchronous protocol additionally imposes a uniform iteration timescale that masks the real-world heterogeneity of descriptor dynamics, since descriptors (e.g. regulatory frameworks, price signals, and infrastructure investments) evolve at fundamentally different speeds. The partial remedy of a single expert-elicited calibration anchor is discussed in Section~\ref{sec:methods:ou}.\\

\noindent Beyond energy-system transitions, the framework applies to any domain where a standard CIB elicitation has been conducted, including urban planning, public-health system transitions, corporate strategy, and climate-adaptation governance. Several natural extensions are worth noting. A single temporal calibration anchor, elicited from the expert panel, would fix the location of the first IRF peak in calendar time without altering the eigenstructure-derived ordering of descriptor responses. The framework also extends naturally to the probabilistic CIB setting of \citet{Salo2022} and \citet{Roponen2024}, where CIM entries carry full uncertainty distributions, allowing the LRT objects to be computed across the posterior. Both the LRT objects and the companion pathway framework of \citet{Ross2026} are implemented in PyCIB \citep{RossAG2026pycib}, making the full analytical toolkit accessible to any team engaged in CIB elicitation.\\

\section*{Data statement} \noindent Data and model code are available at: \url{https://github.com/ag-ross/CIBLRT}.

\section*{Funding} \noindent This work was supported in part by the Helmholtz Association under the programme ``Energy System Design''. Open Access is funded by the Deutsche Forschungsgemeinschaft (DFG, German Research Foundation) (491111487).


\clearpage

\setcounter{section}{0}
\setcounter{subsection}{0}
\setcounter{figure}{0}
\setcounter{table}{0}
\setcounter{equation}{0}
\renewcommand{\thesection}{S\arabic{section}}
\renewcommand{\thesubsection}{S\arabic{section}.\arabic{subsection}}
\renewcommand{\thefigure}{S\arabic{figure}}
\renewcommand{\thetable}{S\arabic{table}}
\renewcommand{\theequation}{S\arabic{equation}}

\begin{center}
    {\normalsize\textbf{Supplementary Material}\par}
    \vspace{0.25cm}
    {\normalsize \manuscripttitle\par}
\end{center}
\vspace{0.5cm}

\noindent This supplementary document is organised as follows. Section~\ref{sec:SA} reports three systematic sensitivity analyses corresponding to the three free modelling parameters of the main analysis, namely the IO-3 rescaling target $\rho_{\mathrm{target}}$ (Section~\ref{sec:S1}), the calibration horizon $t_1$ (Section~\ref{sec:S2}), and the Monte Carlo noise scale $\sigma$ (Section~\ref{sec:S3}). Together, they support the robustness claims summarised in Section~\ref{sec:results:sensitivity} of the main paper. Section~\ref{sec:S4} reports a two-hop path-decomposition cross-check of anchor sensitivity, which is a validity test of a different kind rather than a sensitivity analysis of a free parameter. Section~\ref{sec:S5} gives the full cross-impact multiplier diagonal entries across all descriptors and scenarios. Section~\ref{sec:S6} provides a self-contained derivation of the susceptibility formula, including a proof that the stationary covariance $\Sigma$ cancels exactly from the Green-Kubo integral, leaving the four analytical objects depending only on the drift matrix $M$. Supplementary Table~\ref{tab:budgets} gives the complete perturbation budgets for all 20 directed scenario transitions. Supplementary Table~\ref{tab:lambda_diag} gives the diagonal entries $\Lambda_{jj}$ of the cross-impact multiplier for all fifteen descriptors at each attractor.

\section{Sensitivity Analysis}
\label{sec:SA}

\noindent The three subsections below each vary one free modelling parameter and assess the robustness of key outputs to that choice.\\

\subsection{IO-3 Rescaling Target}
\label{sec:S1}

\noindent Supplementary Figure~\ref{fig:S1} shows four key LRT outputs as functions of $\rho_{\mathrm{target}} \in [0.50, 0.99]$ evaluated on a 50-point grid. Two features are consistent across all panels. First, the qualitative ordering of scenarios (which scenario has the largest multiplier norm, which transition has the highest budget) is stable across the full range, so the rankings reported in the main analysis do not depend on the choice of $\rho_{\mathrm{target}}$. Second, the absolute magnitudes of all four outputs increase sharply as $\rho_{\mathrm{target}} \to 1$, since the system approaches marginal stability and both the multiplier and the susceptibility diverge whilst the perturbation budget grows without bound. This confirms that outputs should be read as relative rather than absolute measures, and that the analyst's choice of $\rho_{\mathrm{target}}$ must be treated explicitly as a modelling decision and reported alongside the results.\\

\begin{figure}[!ht]
    \centerline{%
    \includegraphics[keepaspectratio]{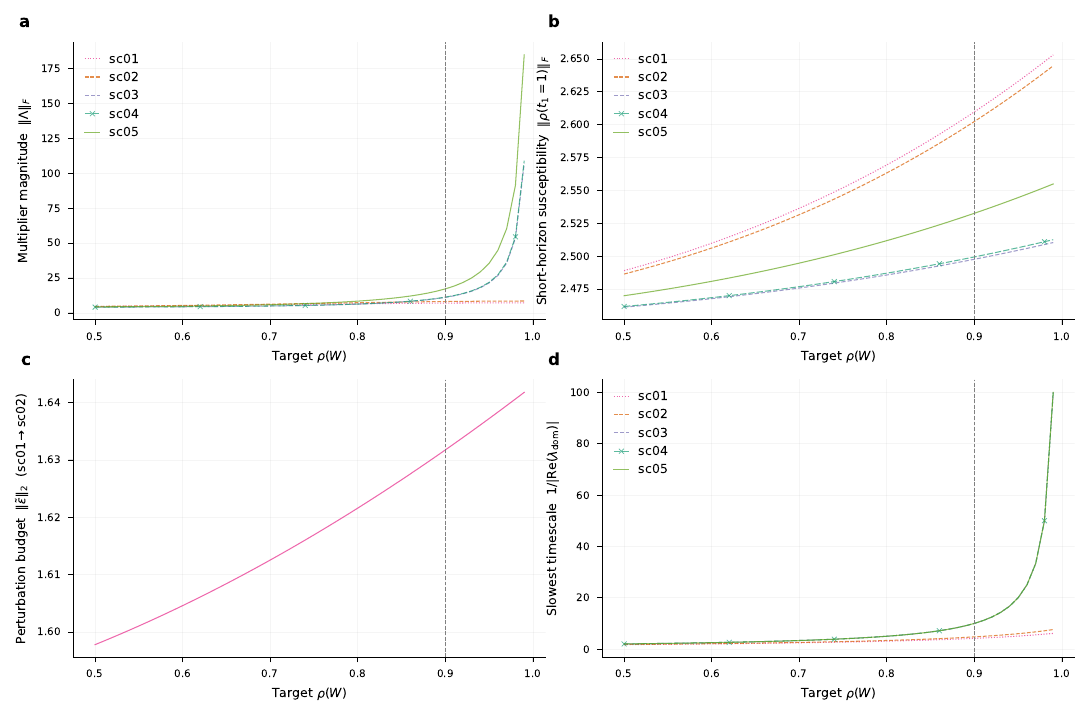}
    }
    \caption{LRT output sensitivity to the IO-3 rescaling target. Sensitivity of key outputs to the IO-3 rescaling target $\rho_{\mathrm{target}}$, evaluated on a 50-point grid from 0.50 to 0.99. Panel \textbf{a} (top left): Frobenius norm $\|\Lambda\|_F$ of the cross-impact multiplier. Panel \textbf{b} (top right): short-horizon susceptibility norm $\|\rho(t_1{=}1)\|_F$. Panel \textbf{c} (bottom left): perturbation budget $\|\widetilde{\boldsymbol{\varepsilon}}\|$ for the sc01$\to$sc02 transition. Panel \textbf{d} (bottom right): slowest relaxation timescale $1/|\operatorname{Re}(\lambda_{\mathrm{dom}})|$, where $\lambda_{\mathrm{dom}}$ is the eigenvalue of $M$ with least-negative real part (slowest-decaying mode). The vertical dashed line marks the nominal value $\rho_{\mathrm{target}} = 0.9$ used in the main analysis.}
    \label{fig:S1}
\end{figure}

\subsection{Calibration Horizon Sweep}
\label{sec:S2}

\noindent Supplementary Figure~\ref{fig:S2}, panel~a shows the perturbation budget $\|\widetilde{\boldsymbol{\varepsilon}}(t_1)\|$ for all 20 directed scenario pairs as a function of the calibration horizon $t_1 \in [0.5, 10.0]$ (20 equally spaced evaluation points). Budget curves are not monotone over the plotted range. They decrease steeply from the short-horizon limit, reach a minimum (typically around $t_1 \approx 3$--$5$ for most pairs in this dataset), and then increase gradually back toward a finite asymptote as the susceptibility converges to the long-run multiplier. The nominal value $t_1 = 1.0$ used in the main analysis is chosen within the short-to-medium horizon range, where network propagation is active and qualitative differences between scenario pairs are clearly readable across the full range of scenario pairs. By comparison, the minimum-budget horizon ($t_1 \approx 3$--$5$) produces identical cluster rankings but yields lower absolute values. This behaviour is captured by the sweep in panel~(a).\\

\noindent The principal finding is that the three-cluster grouping of scenario pairs is stable across the full range of $t_1$. Near-degenerate pairs (sc01$\leftrightarrow$sc02 and sc03$\leftrightarrow$sc04, Hamming distance $H = 1$) always produce the smallest budgets and the sc01$\leftrightarrow$sc05 pair always produces the largest. Nevertheless, within clusters, reorderings occur at several horizons between pairs with closely similar budgets. This establishes the broad budget ranking as an intrinsic property of the cross-impact network, independent of the calibration horizon choice, and confirms that the key qualitative finding (network structure, not Hamming distance alone, determines transition cost) holds across all modelling assumptions examined.\\

\begin{figure}[!ht]
    \centerline{%
    \includegraphics[keepaspectratio]{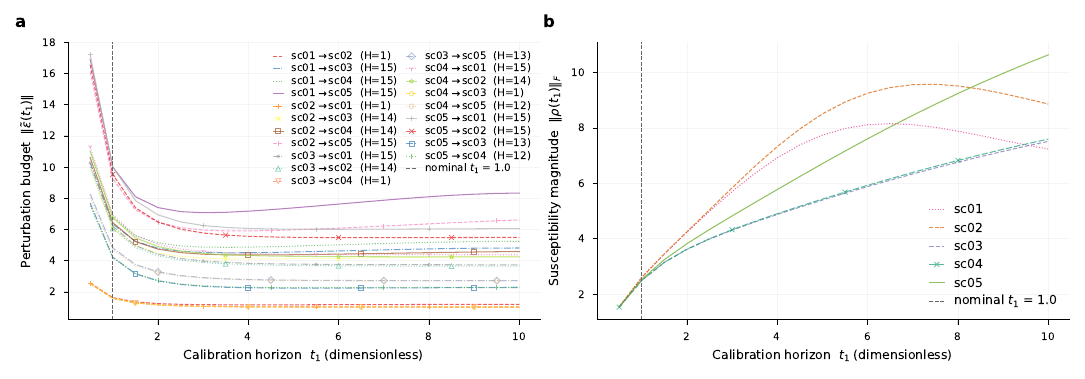}
    }
    \caption{Budget and susceptibility norm sensitivity to calibration horizon. Calibration-horizon sweep over $t_1 \in [0.5, 10.0]$ (20 evaluation points; $t_1$ in dimensionless model units). Panel \textbf{a} (left): perturbation budget $\|\widetilde{\boldsymbol{\varepsilon}}(t_1)\|$ for all 20 directed scenario pairs. Each curve corresponds to one directed transition, labelled by origin and target scenario with Hamming distance $H$. Panel \textbf{b} (right): Frobenius norm $\|\rho(t_1)\|_F$ of the susceptibility matrix, shown separately for each of the five consistent scenarios. In both panels the vertical dashed line marks the nominal value $t_1 = 1.0$ used in the main analysis.}
    \label{fig:S2}
\end{figure}

\subsection{Monte Carlo Noise-Scale Sensitivity}
\label{sec:S3}

\noindent Supplementary Figure~\ref{fig:S3} examines how the perturbation budget varies with the Monte Carlo noise scale $\sigma$ for the representative transition sc05$\to$sc03 (analysed in Section~\ref{sec:results:irf} of the main paper). At each of ten noise levels $\sigma \in [0.1, 1.0]$, $10{,}000$ independent realisations were generated, matching the sample size used in the Figure~\ref{fig:irf_sc05_sc03} ribbon and the Figure~\ref{fig:shock_sc05_policy} unit-impulse analysis (in the main paper). Each $\sigma$ value uses its own RNG seed, so the realisations are independent across noise levels rather than the same draws rescaled. Each realisation adds Gaussian noise $\mathcal{N}(0, \sigma^2)$ independently to every off-diagonal CIM entry, re-applies IO-3 rescaling, and re-computes the perturbation budget via the implied perturbation formula of the main paper. The standard error on the mean budget at $N = 10{,}000$ is around $0.0015$, and the standard error on the IQR/mean estimate is around $0.05$ percentage points, both small enough that the trend across $\sigma$ values is resolved well below plot resolution. All $10{,}000$ realisations were accepted at every noise level (zero rejection rate). The zero rejection rate confirms that IO-3 rescaling renders the stability condition non-binding for this cross-impact matrix (the condition is satisfied with a large margin throughout the uncertainty range tested).\\

\noindent The mean budget remains at around 4.22 across the full range $\sigma \in [0.1, 1.0]$, shifting by less than $0.1\,\%$ from the deterministic value at any noise level tested. By contrast, the interquartile range grows linearly with $\sigma$, reaching around $4.1\,\%$ of the mean at $\sigma = 1.0$, consistent with a first-order perturbation response and with no evidence of nonlinear threshold behaviour.\\

\noindent These results establish the perturbation budget as a robust property of the cross-impact matrix. The budget is not sensitive to the specific numerical scores agreed by the experts. Rather, it reflects the network's topological structure (which descriptors are connected and with what sign) rather than the precise magnitude of the agreed scores. This is an operationally important finding: practitioners can use the budget with confidence even when individual scores carry substantial uncertainty, as is common in expert-elicitation exercises.\\

\begin{figure}[!ht]
    \centerline{%
    \includegraphics[keepaspectratio]{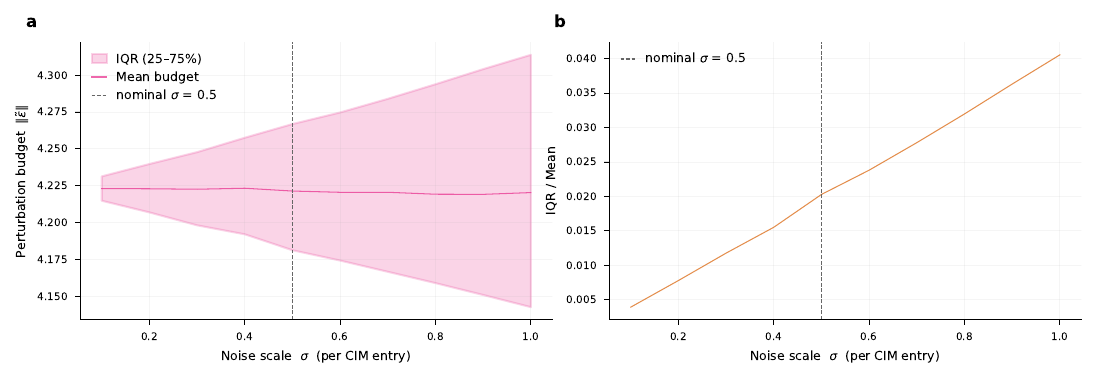}
    }
    \caption{Perturbation budget robustness to score uncertainty. Sensitivity of the perturbation budget to Monte Carlo noise scale $\sigma$ for the transition sc05$\to$sc03 ($N = 10{,}000$ realisations per noise level; all realisations accepted). Panel \textbf{a} (left): budget distribution vs.\ noise scale, showing mean budget (line) and 25th--75th percentile ribbon (shaded) as a function of $\sigma$. Panel \textbf{b} (right): relative budget uncertainty vs.\ noise scale, showing IQR / mean as a function of $\sigma$. The vertical dashed line marks the nominal value $\sigma = 0.5$ used in the main analysis.}
    \label{fig:S3}
\end{figure}

\section{Two-Hop Path-Decomposition Cross-Check}
\label{sec:S4}

\noindent The LRT perturbation budget is the exact answer to a local structural question: how much network-weighted forcing does the linearised cross-impact structure at the origin attractor require to achieve the prescribed displacement? For small perturbations this local answer also approximates the full nonlinear dynamics well, but for the large-displacement cross-cluster transitions examined here (Hamming distances of 12--15 out of 15 descriptors), the linear model is extrapolating far beyond its formal regime of validity. The budget nonetheless remains a well-defined structural property of the origin attractor's network topology. There is no nonlinear ground truth against which it could be declared ``wrong''. The budget characterises the origin attractor's structure, and a different attractor's structure would give a different (equally valid) answer.\\

\noindent The two-hop cross-check reported here tests sensitivity to the choice of linearisation anchor. For any cross-cluster transition $A \to B$, the two-hop decomposed budget via an intermediate attractor $C$ is $\mathcal{B}(A{\to}C) + \mathcal{B}(C{\to}B)$ (where $\mathcal{B}(X{\to}Y) \equiv \|\widetilde{\boldsymbol{\varepsilon}}_{X\to Y}\|$ denotes the perturbation budget for the transition $X{\to}Y$), with each hop using the local linearisation at its own origin attractor. Both the direct and the two-hop estimates are exact answers to their respective local structural questions, differing because the cross-impact structure genuinely changes across attractor contexts. The comparison therefore reveals which budget features are robust to re-anchoring and which are not, rather than quantifying error relative to a ground truth. The budget values used below are taken from Table~\ref{tab:budgets}.\\

\noindent For all four polar fossil--net-zero pairs (sc01/sc02\,$\leftrightarrow$\,sc05), the direct budget is consistently below the two-hop re-anchored estimate, with gaps ranging from 4 to 21\,\%. Re-anchoring at an intermediate attractor raises the estimated cost in every case, reflecting the structural heterogeneity between the polar and intermediate attractors.\\

\noindent Two additional findings from this comparison warrant explicit note. First, the sc01$\leftrightarrow$sc05 pair is symmetric in direct budgets ($10.01$ each way) but asymmetric under two-hop substitution (sc05$\to$sc01 via sc03: $10.43$; sc01$\to$sc05 via sc03: $11.18$), revealing that the apparent symmetry is a feature of the local linearisation at each polar attractor rather than an intrinsic structural property. It breaks when intermediate attractors serve as anchor points, demonstrating sensitivity to the choice of linearisation anchor. Second, the sc02$\leftrightarrow$sc05 asymmetry reverses. The direct budgets suggest moving from sc02 to sc05 is cheaper ($9.25$ vs $9.53$), but the two-hop estimates suggest the opposite (sc05$\to$sc02: $10.28$; sc02$\to$sc05: $10.88$). This reversal indicates that the small sc02$\leftrightarrow$sc05 asymmetry in the direct budgets is not robust to the choice of linearisation anchor, and small asymmetries between nearly symmetric polar pairs should not be over-interpreted.\\

\noindent The three-cluster grouping is robust throughout. Near-degenerate pairs remain the cheapest (direct: 1.59--1.63) and polar fossil--net-zero pairs the most expensive even under two-hop substitution (minimum polar two-hop: $10.28$, well above the maximum middle cross-cluster direct budget of $6.94$). The broad budget ranking is therefore a structural property of the cross-impact network. For practical use, direct budgets for the polar fossil--net-zero pairs (sc01/sc02\,$\leftrightarrow$\,sc05) are consistently below the corresponding two-hop re-anchored estimates by 4--21\,\%, reflecting structural heterogeneity between origin and intermediate attractors, and small asymmetries within the polar group should be treated with caution. For other large-displacement pairs (fossil--intermediate or intermediate--net-zero transitions), no natural intermediate attractor exists between the endpoints, so the two-hop decomposition does not provide a comparable locality bound and the locality error for those transitions is unconstrained by this analysis.\\

\section{Type I Cross-Impact Multiplier Diagonal Entries}
\label{sec:S5}

\noindent Table~\ref{tab:lambda_diag} reports the diagonal entries $\Lambda_{jj}$ of the cross-impact multiplier $(I_N - W)^{-\top}$ for all fifteen descriptors at each of the five consistent scenarios ($\rho_{\mathrm{target}} = 0.9$). Entry signs are invariant to $\rho_{\mathrm{target}}$ (confirmed numerically across the full range $[0.50, 0.99]$, as shown in Section~\ref{sec:S1}), whilst magnitudes scale with it. Negative entries indicate descriptors for which sustained pressure ultimately reverses its own long-run activation through indirect feedback in the cross-impact network.\\

\section{Derivation of the CIB-LRT Analytical Objects}
\label{sec:S6}

\noindent A key technical observation enabling the CIB-LRT framework is that the stationary covariance $\Sigma$ of the Ornstein-Uhlenbeck (OU) embedding cancels exactly from the Green-Kubo integral, so all four analytical objects depend only on the drift matrix $M = W - I_N$ and require no estimation of noise parameters. This section gives self-contained proofs of the susceptibility matrix (Proposition~\ref{prop:susceptibility}), the impulse response function (Corollary~\ref{cor:irf}), and the Type I cross-impact multiplier (Corollary~\ref{cor:multiplier}); the perturbation budget follows directly from the susceptibility by inversion and requires no separate derivation. The existence of the stationary distribution and its Lyapunov characterisation (equation~\eqref{eq:S_lyapunov} below) are standard results that are cited rather than proved here. All four objects reduce to matrix exponential evaluations requiring no simulation.\\

\subsection*{Assumptions and stationary distribution}

\noindent Let $M \in \mathbb{R}^{N \times N}$ be the drift matrix with all eigenvalues having strictly negative real parts (the stability condition; see Section~\ref{sec:methods:ou} of the main paper). The OU process

\begin{equation}
  \mathrm{d}\mathbf{x}(t) = M\mathbf{x}(t)\,\mathrm{d}t + \mathrm{d}\mathbf{B}(t)
  \label{eq:S_SDE}
\end{equation}

then admits a unique stationary distribution, which is multivariate Gaussian with mean zero and covariance matrix $\Sigma \in \mathbb{R}^{N \times N}$ satisfying the Lyapunov equation

\begin{equation}
  M\Sigma + \Sigma M^\top + I_N = 0.
  \label{eq:S_lyapunov}
\end{equation}

(\citealt{Risken1989}.) The stability condition ensures $\Sigma$ is symmetric positive definite and hence invertible.\\

\subsection*{Equilibrium cross-covariance}

\begin{proposition}
  \label{prop:crosscov}
  Under the stability condition, the equilibrium cross-covariance function of the stationary process satisfies, for all $\tau \geq 0$,
  \begin{equation}
    \langle\mathbf{x}(0)\,\mathbf{x}(\tau)^\top\rangle_0 = \Sigma\exp(M^\top\tau).
    \label{eq:S_crosscov}
  \end{equation}
\end{proposition}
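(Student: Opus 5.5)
The plan is to write the OU solution forward from time $0$ and then condition on $\mathbf{x}(0)$. For $\tau \geq 0$, variation of constants (Itô's formula applied to $\exp(-Mt)\mathbf{x}(t)$) gives
\begin{equation*}
  \mathbf{x}(\tau) = \exp(M\tau)\,\mathbf{x}(0) + \int_0^{\tau}\exp\bigl(M(\tau-s)\bigr)\,\mathrm{d}\mathbf{B}(s).
\end{equation*}
This is the strong solution of \eqref{eq:S_SDE}. We take the process in its stationary regime, so $\mathbf{x}(0)$ is drawn from the Gaussian law with covariance $\Sigma$ given by \eqref{eq:S_lyapunov}. We also take $\mathbf{x}(0)$ independent of the Brownian increments on $[0,\tau]$; this is the standard adaptedness or independence assumption on the initial condition.

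Transposing the solution, we get $\mathbf{x}(\tau)^\top = \mathbf{x}(0)^\top\exp(M^\top\tau) + \bigl(\int_0^\tau \exp(M(\tau-s))\,\mathrm{d}\mathbf{B}(s)\bigr)^\top$. Left-multiplying by $\mathbf{x}(0)$ and taking stationary expectations then splits $\langle\mathbf{x}(0)\mathbf{x}(\tau)^\top\rangle_0$ into two terms.
\begin{itemize}
  \item The first term is $\langle\mathbf{x}(0)\mathbf{x}(0)^\top\rangle_0\exp(M^\top\tau)$. Since the stationary mean is zero, $\langle\mathbf{x}(0)\mathbf{x}(0)^\top\rangle_0=\Sigma$, so this term is $\Sigma\exp(M^\top\tau)$.
  \item The second term is $\bigl\langle \mathbf{x}(0)\bigl(\int_0^\tau \exp(M(\tau-s))\,\mathrm{d}\mathbf{B}(s)\bigr)^\top\bigr\rangle_0$.
\end{itemize}
To kill the second term, we use that the Itô integral has mean zero and is independent of $\mathbf{x}(0)$. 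Equivalently, we condition on $\mathcal{F}_0$ and use the martingale property. The expectation therefore factorises as $\langle\mathbf{x}(0)\rangle_0\cdot 0 = 0$, which gives \eqref{eq:S_crosscov}.

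The main point needing care is making the independence and integrability rigorous. For integrability, the stochastic integral has a finite second moment, $\int_0^\tau\|\exp(M(\tau-s))\|^2\,\mathrm{d}s<\infty$, and $\mathbf{x}(0)$ is Gaussian with finite moments. So the product is integrable, and the tower property applies componentwise.

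The other point is the convention. The statement fixes the order $\mathbf{x}(0)\,\mathbf{x}(\tau)^\top$ with $\tau\ge0$, so the exponential appears transposed on the right. I will note explicitly that the reverse order gives $\exp(M\tau)\Sigma$. This is consistent with the transpose remark in the main text's susceptibility footnote.

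Stationarity is used only to identify $\mathrm{Cov}(\mathbf{x}(0))$ with $\Sigma$. The existence and uniqueness of $\Sigma$ under the stability condition are cited, as in the preceding subsection.
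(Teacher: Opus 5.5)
Your proof is correct and follows the paper's argument essentially step for step: the variation-of-constants representation, independence of the stochastic integral from $\mathbf{x}(0)$ to eliminate the cross term, and $\langle\mathbf{x}(0)\mathbf{x}(0)^\top\rangle_0=\Sigma$ for the remaining term. Your integrability check and your remark that the reversed order gives $\exp(M\tau)\Sigma$ (the transpose, as the paper also notes after its proof) add rigour to the same route rather than changing it.
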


\begin{proof}
  The formal solution of~\eqref{eq:S_SDE} is $\mathbf{x}(\tau) = \exp(M\tau)\mathbf{x}(0) + \int_0^\tau \exp(M(\tau-s))\,\mathrm{d}\mathbf{B}(s)$.
  In the stationary regime the stochastic integral is independent of $\mathbf{x}(0)$ (future increments of $\mathbf{B}$ are independent of the current state), so
  \begin{align*}
    \langle\mathbf{x}(0)\,\mathbf{x}(\tau)^\top\rangle_0
      &= \langle\mathbf{x}(0)\bigl(\exp(M\tau)\mathbf{x}(0)\bigr)^\top\rangle_0 \\
      &= \langle\mathbf{x}(0)\,\mathbf{x}(0)^\top\rangle_0\,\exp(M^\top\tau)
       = \Sigma\exp(M^\top\tau). \qedhere
  \end{align*}
\end{proof}

\noindent The argument uses the standard variation-of-constants representation of the OU process; cf.\ \citet{Raseta} for a related treatment in the IO economics setting. Note that \citet{Raseta} states the equilibrium cross-covariance (using their notation $Q(\tau)$ for this quantity) in the form $\exp(M\tau)\Sigma$, whereas the present derivation gives $\Sigma\exp(M^\top\tau)$; the two expressions are transposes of one another. The labels for $Q(\tau)$ and its transpose are inverted in \citet{Raseta}, though the final susceptibility formula there survives because the susceptibility proof uses the correctly-valued quantity.\\

\subsection*{Susceptibility formula and the \texorpdfstring{$\Sigma$}{Sigma}-cancellation}

\noindent The Green-Kubo susceptibility matrix at horizon $t_1 > 0$ is defined as
\begin{equation}
  \rho(t_1) = \Sigma^{-1}\int_0^{t_1}\langle\mathbf{x}(0)\,\mathbf{x}(\tau)^\top\rangle_0\,\mathrm{d}\tau.
  \label{eq:S_GK}
\end{equation}

\begin{proposition}
  \label{prop:susceptibility}
  Under the stability condition,
  \begin{equation}
    \rho(t_1) = M^{-\top}\bigl(\exp(M^\top t_1) - I_N\bigr).
    \label{eq:S_susc}
  \end{equation}
\end{proposition}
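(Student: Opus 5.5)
The plan is to substitute Proposition~\ref{prop:crosscov} into the Green-Kubo definition~\eqref{eq:S_GK} and then evaluate the resulting matrix-exponential integral in closed form. By Proposition~\ref{prop:crosscov}, the integrand is $\Sigma\exp(M^\top\tau)$ for every $\tau\ge 0$. Because $\Sigma$ is constant in $\tau$, it can be taken outside the integral on the left. The prefactor $\Sigma^{-1}$ then cancels it exactly. Invertibility of $\Sigma$ is guaranteed, since the stability condition together with the Lyapunov equation~\eqref{eq:S_lyapunov} makes $\Sigma$ symmetric positive definite. This leaves
\[
\rho(t_1)=\int_0^{t_1}\exp(M^\top\tau)\,\mathrm{d}\tau .
\]
This is the $\Sigma$-cancellation; note that it uses only left-multiplication, so the ordering matters and works out.

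The second step evaluates the integral. I would differentiate $\tau\mapsto\exp(M^\top\tau)$ termwise from its power series, which is justified by uniform convergence on compact intervals. This gives the identity
\[
\frac{\mathrm{d}}{\mathrm{d}\tau}\exp(M^\top\tau)=M^\top\exp(M^\top\tau).
\]
Integrating over $[0,t_1]$ with the fundamental theorem of calculus, applied entrywise, yields
\[
M^\top\int_0^{t_1}\exp(M^\top\tau)\,\mathrm{d}\tau=\exp(M^\top t_1)-I_N .
\]
Here $M^\top$ has been pulled out of the integral because it is constant.

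The third step inverts $M^\top$. Stability means no eigenvalue of $M$ is zero, so $M^\top$ is invertible, as already noted in the footnote of Section~\ref{sec:methods:ou}. Multiplying on the left by $M^{-\top}$ gives~\eqref{eq:S_susc}. As a remark, $M^{-\top}$ commutes with $\exp(M^\top t_1)$, so the factors may be written in either order.

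The only genuinely delicate point is the first step. It requires that the expectation in~\eqref{eq:S_GK} is well defined and that Proposition~\ref{prop:crosscov} applies for every lag in $[0,t_1]$, so that integrating the covariance identity is legitimate. Since the cross-covariance is a continuous matrix-valued function of $\tau$, the integral is an ordinary Riemann integral and there is no obstruction. The remaining steps are routine matrix calculus.
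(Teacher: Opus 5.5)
Your proposal is correct and follows the paper's proof essentially step for step. You substitute Proposition~\ref{prop:crosscov} into~\eqref{eq:S_GK}, cancel $\Sigma^{-1}\Sigma$ on the left, and evaluate $\int_0^{t_1}\exp(M^\top\tau)\,\mathrm{d}\tau$ using the invertibility of $M^\top$ that follows from the stability condition. The only difference is that you derive this integral from $\frac{\mathrm{d}}{\mathrm{d}\tau}\exp(M^\top\tau)=M^\top\exp(M^\top\tau)$ and the fundamental theorem of calculus, whereas the paper cites the standard identity.
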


\begin{proof}
  Substituting Proposition~\ref{prop:crosscov} into~\eqref{eq:S_GK}:
  \begin{align}
    \rho(t_1)
      &= \Sigma^{-1}\int_0^{t_1}\Sigma\exp(M^\top\tau)\,\mathrm{d}\tau \notag \\
      &= \int_0^{t_1}\exp(M^\top\tau)\,\mathrm{d}\tau \label{eq:S_cancel}\\
      &= M^{-\top}\bigl(\exp(M^\top t_1) - I_N\bigr). \notag
  \end{align}
  The $\Sigma^{-1}$ and $\Sigma$ cancel in step~\eqref{eq:S_cancel}. The stability condition ensures $M^\top$ is invertible, so the matrix exponential integral evaluates as stated \citep{higham2008functions}.
\end{proof}

\noindent The cancellation at step~\eqref{eq:S_cancel} is the key technical fact of the CIB-LRT application: the specific noise structure of the OU embedding enters the Green-Kubo integral only through $\Sigma$, and $\Sigma$ disappears. The result~\eqref{eq:S_susc} is determined entirely by the drift matrix $M$, which is in turn determined by the CIM at the reference attractor. A related result in the IO economics setting appears in \citet{Raseta}; the proofs presented here are self-contained and do not depend on that work.

\subsection*{Impulse response function}

\begin{corollary}
  \label{cor:irf}
  Under the stability condition, the impulse response function at lag $\tau \geq 0$ is
  \begin{equation}
    R(\tau) = \exp(M^\top\tau)\,\widetilde{\boldsymbol{\varepsilon}},
    \label{eq:S_IRF}
  \end{equation}
  where $\widetilde{\boldsymbol{\varepsilon}} = \rho(t_1)^{-1}\Delta\mathbf{z}$ is the implied perturbation vector.
\end{corollary}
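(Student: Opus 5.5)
The plan is to obtain Corollary~\ref{cor:irf} from the Green-Kubo response formula, using the same $\Sigma$-cancellation as in Proposition~\ref{prop:susceptibility}. Within the linear response convention followed here, the response to a constant perturbation $\widetilde{\boldsymbol{\varepsilon}}$ switched on at time $0$ is defined at lag $\tau$ as
\begin{equation*}
R(\tau) = \Sigma^{-1}\langle\mathbf{x}(0)\,\mathbf{x}(\tau)^\top\rangle_0\,\widetilde{\boldsymbol{\varepsilon}}.
\end{equation*}
This is the normalised equilibrium cross-correlation applied to the forcing, and it is precisely the integrand of \eqref{eq:S_GK}. With this definition the susceptibility is the time integral of the response kernel:
\begin{equation*}
\rho(t_1)\widetilde{\boldsymbol{\varepsilon}} = \int_0^{t_1} R(\tau)\,\mathrm{d}\tau.
\end{equation*}
Fixing this definition first ensures the corollary is stated in the same Green-Kubo convention as the susceptibility, and not in the SDE-mean convention discussed in the footnote of Section~\ref{sec:methods:susceptibility}.

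Next I substitute Proposition~\ref{prop:crosscov}, which gives $\langle\mathbf{x}(0)\mathbf{x}(\tau)^\top\rangle_0 = \Sigma\exp(M^\top\tau)$. Because the stability condition makes $\Sigma$ symmetric positive definite, $\Sigma^{-1}\Sigma = I_N$, and this yields $R(\tau) = \exp(M^\top\tau)\widetilde{\boldsymbol{\varepsilon}}$. Setting $\widetilde{\boldsymbol{\varepsilon}} = \rho(t_1)^{-1}\Delta\mathbf{z}$ as in \eqref{eq:eps} then gives the implied perturbation. This requires $\rho(t_1)$ to be invertible, which I would verify separately. Since $M^{-\top}$ exists, invertibility of $\rho(t_1)$ reduces to invertibility of $\exp(M^\top t_1)-I_N$. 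The eigenvalues of this matrix are $e^{\lambda t_1}-1$, where $\lambda$ ranges over the eigenvalues of $M$. Each has modulus $|e^{\lambda t_1}| = e^{\operatorname{Re}(\lambda) t_1} < 1$, so none is zero.

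As a consistency check, which also yields the integration property quoted in Section~\ref{sec:methods:irf}, I would integrate over $[0,t_1]$. Using $\int_0^{t_1}\exp(M^\top\tau)\,\mathrm{d}\tau = M^{-\top}(\exp(M^\top t_1)-I_N)$ from Proposition~\ref{prop:susceptibility},
\begin{equation*}
\int_0^{t_1}R(\tau)\,\mathrm{d}\tau = \rho(t_1)\rho(t_1)^{-1}\Delta\mathbf{z} = \Delta\mathbf{z}.
\end{equation*}
Taking $\Delta\mathbf{z}\mapsto$ arbitrary and $\widetilde{\boldsymbol{\varepsilon}} = \mathbf{e}_i$ gives \eqref{eq:unit_irf_integral} as a special case.

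The main obstacle is not computational but definitional. I must justify that the Green-Kubo response kernel is $\Sigma^{-1}\langle\mathbf{x}(0)\mathbf{x}(\tau)^\top\rangle_0$, with $\Sigma^{-1}$ on the left and the transpose ordering as written, and not the transposed SDE mean response $\exp(M\tau)$. I would handle this by taking the kernel as the defining object, consistent with \eqref{eq:S_GK}, and remarking that differentiating \eqref{eq:S_GK} in $t_1$ recovers it. Then $R(\tau) = \frac{\mathrm{d}}{\mathrm{d}\tau}\bigl[\rho(\tau)\widetilde{\boldsymbol{\varepsilon}}\bigr]$, which is an equivalent one-line route to the same formula.
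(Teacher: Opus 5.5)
Your proposal is correct and follows essentially the same route as the paper. You take the Green--Kubo response kernel $\Sigma^{-1}\langle\mathbf{x}(0)\,\mathbf{x}(\tau)^\top\rangle_0$ (the integrand of the susceptibility) as the defining object, apply it to $\widetilde{\boldsymbol{\varepsilon}}$, substitute Proposition~\ref{prop:crosscov}, and cancel $\Sigma$; your extra checks (invertibility of $\rho(t_1)$ via $|e^{\lambda t_1}|<1$, and $\int_0^{t_1}R(\tau)\,\mathrm{d}\tau=\Delta\mathbf{z}$) are sound and match the paper's stability footnote and the remark immediately after the corollary.
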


\begin{proof}
  In the Green-Kubo framework \citep{Green1954,Kubo1957,Klimek2019}, the susceptibility is the integral $\rho(t_1) = \int_0^{t_1}\mathcal{R}(\tau)\,\mathrm{d}\tau$ of the time-resolved response kernel $\mathcal{R}(\tau) = \Sigma^{-1}\langle\mathbf{x}(0)\,\mathbf{x}(\tau)^\top\rangle_0$, as in equation~\eqref{eq:S_GK}. The IRF is this kernel applied to $\widetilde{\boldsymbol{\varepsilon}}$:
  \[R(\tau) = \mathcal{R}(\tau)\,\widetilde{\boldsymbol{\varepsilon}} = \Sigma^{-1}\langle\mathbf{x}(0)\,\mathbf{x}(\tau)^\top\rangle_0\,\widetilde{\boldsymbol{\varepsilon}}.\]
  Substituting Proposition~\ref{prop:crosscov}:
  \[R(\tau) = \Sigma^{-1}\cdot\Sigma\exp(M^\top\tau)\cdot\widetilde{\boldsymbol{\varepsilon}} = \exp(M^\top\tau)\,\widetilde{\boldsymbol{\varepsilon}}.\]
  The $\Sigma^{-1}$ and $\Sigma$ cancel by the same mechanism as Proposition~\ref{prop:susceptibility}.
\end{proof}

\noindent It follows immediately that $\int_0^{t_1}R(\tau)\,\mathrm{d}\tau = \rho(t_1)\widetilde{\boldsymbol{\varepsilon}} = \Delta\mathbf{z}$, confirming that the IRF integrates exactly to the target displacement over the calibration horizon.

\subsection*{Type I cross-impact multiplier}

\begin{corollary}
  \label{cor:multiplier}
  Under the stability condition,
  \begin{equation}
    \Lambda := \lim_{t_1\to\infty}\rho(t_1) = -M^{-\top} = (I_N - W)^{-\top}.
    \label{eq:S_multiplier}
  \end{equation}
\end{corollary}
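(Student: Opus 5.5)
The plan is to start from the closed form of Proposition~\ref{prop:susceptibility}, $\rho(t_1) = M^{-\top}\bigl(\exp(M^\top t_1) - I_N\bigr)$, and pass to the limit term by term. Since $M^{-\top}$ is a fixed matrix (well-defined by the stability condition, as noted after Proposition~\ref{prop:susceptibility}) and matrix multiplication is continuous, it suffices to show that $\exp(M^\top t_1) \to 0$ as $t_1 \to \infty$. Then $\rho(t_1) \to M^{-\top}(0 - I_N) = -M^{-\top}$.

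The main step, and the only one with real content, is the decay $\exp(M^\top t_1)\to 0$. The eigenvalues of $M^\top$ coincide with those of $M$, so all have strictly negative real parts. I would use the Jordan decomposition $M^\top = P J P^{-1}$, which gives $\exp(M^\top t) = P\exp(Jt)P^{-1}$. Each Jordan block of size $m$ with eigenvalue $\lambda$ contributes entries of the form $t^k e^{\lambda t}/k!$ with $k<m$. Each such entry has modulus $t^k e^{\operatorname{Re}(\lambda) t}$, which tends to zero because $\operatorname{Re}(\lambda)<0$ and exponential decay dominates polynomial growth.

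An alternative that avoids Jordan forms is to set $-\mu = \max\operatorname{Re}(\lambda) < 0$ and pick any $0<\mu'<\mu$. The standard bound $\|\exp(M^\top t)\| \le C e^{-\mu' t}$ then gives the limit directly. The minor subtlety is non-diagonalisability of $M$, since CIB matrices are asymmetric. The Jordan argument handles this explicitly, so I would present that route.

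Finally, I would rewrite the limit in terms of $W$. With $M = W - I_N$ we get $-M = I_N - W$, hence $-M^{-\top} = \bigl((I_N - W)^\top\bigr)^{-1} = (I_N - W)^{-\top}$, using that inversion and transposition commute and that $(-A)^{-1} = -A^{-1}$. As a consistency check, I would observe that when $\rho(W)<1$ the limit equals the Neumann series $\sum_n (W^\top)^n$, matching equation~\eqref{eq:multiplier}.

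When $M$ is the rescaled $M_\alpha$, the same identity reads $(I_N-\alpha W)^{-\top}$. I would add a remark that the statement is to be read with $W$ denoting the rescaled matrix.
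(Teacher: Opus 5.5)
Your proposal is correct and follows the paper's proof: take the limit $t_1\to\infty$ in the closed form of Proposition~\ref{prop:susceptibility}, use $\exp(M^\top t_1)\to 0$, and substitute $M = W - I_N$. Your Jordan-form argument justifies the decay step, which the paper simply asserts. Your remark that $W$ must be read as the rescaled $\alpha W$ when $M = M_\alpha$ is a correct clarification.
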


\begin{proof}
  Since all eigenvalues of $M$ have strictly negative real parts, $\exp(M^\top t_1)\to 0$ as $t_1\to\infty$. Taking the limit in Proposition~\ref{prop:susceptibility}:
  \[\Lambda = M^{-\top}(0 - I_N) = -M^{-\top}.\]
  Substituting $M = W - I_N$ gives $-M^{-\top} = -(W - I_N)^{-\top} = (I_N - W)^{-\top}.$
\end{proof}

\noindent Corollaries~\ref{cor:irf} and~\ref{cor:multiplier}, together with Propositions~\ref{prop:crosscov} and~\ref{prop:susceptibility}, show that all four analytical objects of the CIB-LRT framework are determined entirely by the drift matrix $M$ and reduce to evaluations of the matrix exponential $\exp(M^\top\tau)$ or its integral. No simulation of the OU process is required: the exact closed-form expressions can be evaluated in $O(N^3)$ time using standard algorithms \citep{higham2008functions}, regardless of required precision or matrix dimension.\\


\newpage

\begin{table}[htbp]
\centering
\caption{Perturbation budgets $\|\widetilde{\boldsymbol{\varepsilon}}\|$ for all 20 directed scenario transitions ($t_1 = 1.0$, $\rho_{\mathrm{target}} = 0.9$).}
\label{tab:budgets}
\footnotesize
\begin{tabular}{llccr}
\toprule
From & To & $H$ & $\|\Delta\mathbf{z}\|$ & Budget $\|\widetilde{\boldsymbol{\varepsilon}}\|$ \\
\midrule
sc01 & sc02 &  1 & 1.00 & 1.63 \\
sc01 & sc03 & 15 & 4.24 & $6.36^\dagger$ \\
sc01 & sc04 & 15 & 4.58 & $6.79^\dagger$ \\
sc01 & sc05 & 15 & 7.55 & $10.01^\dagger$ \\
sc02 & sc01 &  1 & 1.00 & 1.59 \\
sc02 & sc03 & 14 & 4.12 & $6.06^\dagger$ \\
sc02 & sc04 & 14 & 4.47 & $6.44^\dagger$ \\
sc02 & sc05 & 15 & 7.35 & $9.25^\dagger$ \\
sc03 & sc01 & 15 & 4.24 & $6.21^\dagger$ \\
sc03 & sc02 & 14 & 4.12 & $6.06^\dagger$ \\
sc03 & sc04 &  1 & 1.00 & 1.59 \\
sc03 & sc05 & 13 & 3.61 & $4.82^\dagger$ \\
sc04 & sc01 & 15 & 4.58 & $6.94^\dagger$ \\
sc04 & sc02 & 14 & 4.47 & $6.76^\dagger$ \\
sc04 & sc03 &  1 & 1.00 & 1.59 \\
sc04 & sc05 & 12 & 3.46 & $4.71^\dagger$ \\
sc05 & sc01 & 15 & 7.55 & $10.01^\dagger$ \\
sc05 & sc02 & 15 & 7.35 & $9.53^\dagger$ \\
sc05 & sc03 & 13 & 3.61 & $4.22^\dagger$ \\
sc05 & sc04 & 12 & 3.46 & $4.19^\dagger$ \\
\bottomrule
\end{tabular}
\smallskip\par\noindent
{\footnotesize Budgets are computed at the origin attractor and are asymmetric by construction. Scenario characterisations are given in Table~\ref{tab:scenarios} of the main paper. $H$: Hamming distance (number of differing descriptors). $\|\Delta\mathbf{z}\|$: Euclidean norm of the state-index displacement vector; differs from $H$ because descriptors can change by 0, 1, or 2 ordinal levels. $^\dagger$~Linearisation extrapolation beyond the formal regime of validity (Section~\ref{sec:discussion} of the main paper); two-hop cross-check (Section~\ref{sec:S4}) shows direct budgets lie 4--21\,\% below re-anchored estimates for polar fossil--net-zero pairs, reflecting structural heterogeneity across attractors rather than numerical error; three-cluster grouping preserved.}
\end{table}

\newpage

\begin{table}[htbp]
\centering
\caption{Diagonal entries $\Lambda_{jj}$ of the cross-impact multiplier $(I_N - W)^{-\top}$ at each attractor ($\rho_{\mathrm{target}} = 0.9$).}
\label{tab:lambda_diag}
\footnotesize
\begin{tabular}{lrrrrr}
\toprule
Descriptor & sc01 & sc02 & sc03 & sc04 & sc05 \\
\midrule
Decarbonisation Outcome   &  0.832 &  0.863 &  1.853 &  1.816 &  1.812 \\
EU Policy Alignment       &  1.366 &  1.395 &  1.676 &  1.736 &  2.147 \\
Electrification Pace      &  1.038 &  1.105 &  1.868 &  1.863 &  1.826 \\
Energy Security Pressure  & $-$0.430$^\dagger$ & $-$0.877$^\dagger$ &  1.233 &  1.251 & $-$0.387$^\dagger$ \\
Fossil Price Pressure     &  0.161 & $-$0.085$^\dagger$ &  1.228 &  1.244 &  0.003 \\
Grid Development          &  1.140 &  1.257 &  1.735 &  1.726 &  2.177 \\
Hydrogen Role             &  0.864 &  0.928 &  1.773 &  1.896 &  1.693 \\
Industrial Energy Demand  &  0.872 &  0.839 &  1.476 &  1.516 &  1.141 \\
Investment Availability   &  1.003 &  1.111 &  1.866 &  1.878 &  2.128 \\
Land Use Conflict         &  0.947 &  1.084 &  1.074 &  1.076 &  0.530 \\
Permitting Pace           &  1.047 &  1.078 &  1.452 &  1.160 &  2.070 \\
Policy Stringency         &  1.004 &  1.044 &  1.691 &  1.720 &  1.861 \\
Public Acceptance         &  1.279 &  1.264 &  1.511 &  1.551 &  1.897 \\
Renewables Deployment     &  0.961 &  1.012 &  1.835 &  1.866 &  2.513 \\
Technology Costs          &  1.325 &  1.498 &  0.889 &  0.893 &  2.013 \\
\bottomrule
\end{tabular}
\smallskip\par\noindent
{\footnotesize $^\dagger$ Negative entry: sustained pressure on descriptor $j$ ultimately reverses its own long-run activation through indirect feedback.}
\end{table}

\newpage

\renewcommand{\bibsection}{%
    \section*{References}
    \addcontentsline{toc}{section}{References}
}
\bibliographystyle{elsarticle-harv}
\bibliography{references}
\end{document}